\title{Graph neural networks and MSO} 
\date{} 					
\author[ \hspace{-1ex}]{Veeti Ahvonen}
\author[ \hspace{-1ex}]{Damian Heiman}
\author[ \hspace{-1ex}]{Antti Kuusisto}
\affil[ \hspace{-1ex}]{Mathematics Research Centre, Tampere University, Finland}
\theoremstyle{plain}
\newtheorem{theorem}{Theorem}[section]
\newtheorem{lemma}[theorem]{Lemma}
\newtheorem{proposition}[theorem]{Proposition}
\theoremstyle{definition}
\newcommand{\N}{\mathbb N}
\newcommand{\Z}{\mathbb Z}
\newcommand{\R}{\mathbb R}
\newcommand{\be}{\mathbf{e}}
\newcommand{\bs}{\mathbf{s}}
\newcommand{\cA}{\mathcal{A}}
\newcommand{\cB}{\mathcal{B}}
\newcommand{\cC}{\mathcal{C}}
\newcommand{\cF}{\mathcal{F}}
\newcommand{\cG}{\mathcal{G}}
\newcommand{\cL}{\mathcal{L}}
\newcommand{\cM}{\mathcal{M}}
\newcommand{\cP}{\mathcal{P}}
\newcommand{\cR}{\mathcal{R}}
\newcommand{\cT}{\mathcal{T}}
\newcommand{\cX}{\mathcal{X}}
\newcommand{\cY}{\mathcal{Y}}
\newcommand*{\abs}[1]{\lvert#1\rvert}   
\newcommand{\msc}{\mathrm{MSC}}
\newcommand{\GMSC}{\mathrm{GMSC}}
\newcommand{\prop}{\mathrm{PROP}}
\newcommand{\var}{\mathrm{VAR}}
\newcommand{\GNN}{\mathrm{GNN}}
\newcommand{\MSO}{\mathrm{MSO}}
\newcommand{\GML}{\mathrm{GML}}
\newcommand{\VGML}{\omega\text{-}\mathrm{GML}}
\newcommand{\FCMPA}{\mathrm{FCMPA}}
\newcommand{\CMPA}{\mathrm{CMPA}}
\newcommand{\AGG}{\mathrm{AGG}}
\newcommand{\COM}{\mathrm{COM}}
\newcommand{\GNNF}{\mathrm{GNN[F]}}
\date{}
\begin{document}

\maketitle

\begin{abstract}
    \noindent
    We give an alternative proof for the existing result that recurrent graph neural networks working with reals have the same expressive power in restriction to monadic second-order logic MSO as the graded modal substitution calculus. The proof is based on constructing distributed automata that capture all MSO-definable node properties over trees.
    We also consider some variants of the acceptance conditions.
\end{abstract}

\section{Introduction}

Graph neural networks (GNNs) are a deep learning architecture for processing graph-structured data, and they have proven useful in various applications. They operate in labeled graphs and assign feature vectors (a vector of reals) to nodes in discrete rounds. Feature updates are carried out in each node by aggregating the feature vectors of the node's neighbours into a single vector and then combining it with the node's own feature vector. The result we prove in this article concerns recurrent graph neural networks; while a GNN typically updates node features for a constant number of rounds, a recurrent GNN can perform updates for an unlimited number of rounds. Moreover, we consider two types of recurrent $\GNN$s; ones that use real numbers ($\GNN[\R]$s) and ones the use floating-point numbers ($\GNNF$s).

In this paper, we offer an alternative proof to the result in \cite{ahvonen_neurips} showing that $\GNN[\R]$s have---in restriction to monadic second order logic (MSO)---the same expressive power as the graded modal substitution calculus ($\GMSC$), a logic introduced in the mentioned article. The proof given in \cite{ahvonen_neurips} made use of parity tree automata (PTAs) and a related result that each MSO-sentence has a corresponding PTA over tree-shaped models.
The proof given here instead makes use of distributed message passing automata.

Graded modal substitution calculus \cite{ahvonen_neurips} is an extension of modal substitution calculus ($\msc$) with counting modalities. The logic $\msc$ was first introduced in \cite{Kuusisto13} where it was used to characterize finite message passing automata. In \cite{ahvonen_neurips} the logic $\GMSC$ was shown to characterize $\GNNF$s and a class of counting message passing automata ($\CMPA$s). Moreover, it was shown that, in restriction to node properties definable in MSO, $\GMSC$ also characterizes $\GNN[\R]$s; this is also proved in the present paper via different methods. The method centrally involves translating MSO-definable node properties into $\CMPA$s (or equivalently into $\GNNF$s) that are equivalent over trees to the MSO-formula.

We have corrected some typos and obvious glitches found in a previous version of this paper \cite{ahvonen2025classdistributedautomatacontains}; e.g., the $\mu$-calculus was accidentally called the $\mu$-fragment.

\section{Preliminaries}

We let $\N$, $\Z$, $\Z_+$ and $\R$ denote the sets of natural numbers, integers, positive integers and real numbers respectively. For all $k \in \Z_+$, we let $[k]$ denote the set $\{1, \dots, k\}$. Given a set $S$, we let $\cP(S)$ denote the power set of $S$, $\cP^+(S)$ denote $\cP(S) \setminus \{\emptyset\}$ and $\cM(S)$ denote the set of functions $S \to \N$, which are called \textbf{multisets}. We may also use double brackets to denote a multiset, i.e., we can write $\{\{s, s', s'\}\}$ to denote the multiset $M \colon \{s, s'\} \to \N$ where $M(s) = 1$ and $M(s') = 2$. Given $k \in \N$ and a multiset $M \colon S \to \N$, we let $M_{|k}$ denote the multiset $M'$ obtained from $M$ by defining for each $s \in S$ that $M_{|k}(s) = \min\{k, M(s)\}$.

Let $\prop = \{p_1, p_2, \dots\}$ be a countably infinite set of proposition symbols, and let $\var = \{X_1, X_2, \dots\}$ be a countably infinite set of schema variables. We usually write $\Pi$ to denote a finite subset of $\prop$ and $\cT$ to denote a finite subset of $\var$. A \textbf{Kripke model} over $\Pi$ (or $\Pi$-model) is a tuple $M = (W, R, V)$, where $W$ is a set of \textbf{nodes}, $R \subseteq W \times W$ is an \textbf{accessibility relation} and $V \colon W \to \cP(\Pi)$ is a valuation that assigns to each node a set of proposition symbols that are true in the node. A \textbf{pointed Kripke model} is a pair $(M,w)$, where $M$ is a Kripke model and $w$ is a node in $M$. If $(w,v) \in R$, then we say that $v$ is a \textbf{successor} or \textbf{neighbour} of $w$. A \textbf{node property} (over $\Pi$) is a class of pointed $\Pi$-models.

\subsection{Graph neural networks over reals and floats}

We introduce notions and notations related to recurrent graph neural networks in two frameworks: over real numbers and over floating-point numbers.

A \textbf{recurrent graph neural network} $\GNN[\R]$ over $(\Pi, d)$ is a tuple $\cG = (\R^d, \pi, \delta, F)$, where $\pi \colon \cP(\Pi) \to \R^d$ is an \textbf{initialization function}, $\delta \colon \R^d \times \cM(\R^d) \to \R^d$ is a \textbf{transition function} consisting of an aggregation function $\AGG \colon \cM(\R^d) \to \R^d$ and a combination function $\COM \colon \R^d \times \R^d \to \R^d$ such that
\[
    \delta(q, M) \colonequals \COM(q, \AGG(M)),
\]
and $F \subseteq \R^d$ is a set of \textbf{accepting feature vectors}.

A $\GNN[\R]$ over ($\Pi, d$) computes over a $\Pi$-model $M = (W, R, V)$ as follows. First, in round $0$, the initialization function assigns to each node $w \in W$ an initial feature vector $x_w^0 = \pi(V(w))$. In each subsequent round $t = 1, 2, \dots$, the feature vector of each node is updated as follows:
\[
    x_w^t = \COM(x_w^{t-1}, \AGG(\{\{ x_v^{t-1} \mid \text{$v$ is a successor of $w$} \}\})).
\]
A $\GNN[\R]$ \textbf{accepts} a node $w \in W$ if the node $w$ visits an accepting feature vector in any round of the run, i.e., $x_{w}^{t} \in F$ for some $t \in \N$.

An \textbf{R-simple $\GNN[\R]$} over $(\Pi, d)$ is a $\GNN[\R]$ over $(\Pi, d)$, where $\AGG$ is the element-wise sum and $\COM$ is defined as follows:
\[
\COM(q, p) = \mathrm{ReLU}^*( q C + p A + b),
\]
where $C , A \in \R^{d \times d} $, $b \in \R^d$ and $\mathrm{ReLU}^*(x) = \min(\max(0,x), 1)$ (applied separately to each vector element).

Let $p, q, \beta \in \Z_+$ and $\beta \geq 2$.
A \textbf{floating-point number over $p, q$ and $\beta$} is a pair
$(\pm \bs, \pm \be) = (\pm d_1 \cdots d_p, \pm e_1 \cdots e_q)$, where $\pm$ means the sign of a string is either $+$ or $-$ and $d_1, \ldots, d_p, e_1, \ldots, e_q \in [0; \beta-1]$.
Each such pair $(\pm d_1 \cdots d_p, \pm e_1 \cdots e_q)$ can be interpreted as a string $\pm 0.d_1 \cdots d_p \times \beta^{\pm e_1 \cdots e_q}$ that can be naturally interpreted as a real number. 
A \textbf{floating-point system over $p, q$ and $\beta$} is a tuple $(\cF, +_\cF, \cdot_\cF)$, where $\cF$ contains all the floating-point numbers over $p, q, \beta$, and $+_\cF$ and $\cdot_\cF$ are arithmetic operations of the form $\cF \times \cF \to \cF$. In more detail, $+_\cF$ is a saturating sum over $\cF$, meaning that first we take the precise sum of two floating-point numbers and then map the resulting number to the closest number in $\cF$. The operation $\cdot_\cF$ is defined analogously.

Now, $\GNNF$s are obtained from $\GNN[\R]$s by simply replacing reals with floating-point numbers. In the case of R-simple $\GNNF$s, the sum over reals and the multiplication over reals are replaced by the corresponding floating-point operations and the order of the element-wise sum is in the increasing order of floats. We make a natural assumption that $\GNNF$s are limited in their ability to distinguish neighbours, i.e., they have a bound $k$ such that $\AGG(M) = \AGG(M_{|k})$ for all multisets $M \in \cM(\cF^d)$.

\subsection{Logics}

In this section, we define the logics used in this paper. 

Graded modal logic ($\GML$) is obtained from modal logic by replacing the ordinary modality with a counting modality. More formally, let $\Pi$ be a set of proposition symbols. The \textbf{$\Pi$-formulae of $\GML$} are constructed according to the following grammar:
\[
    \varphi \coloncolonequals p_i \,|\, \neg\varphi \,|\, \varphi \lor \varphi \,|\, \Diamond_{\geq k} \varphi,
\]
where $p_i \in \Pi$ and $k \in \N$. We let $(M,w) \models \varphi$ denote that $\varphi$ is true in the pointed model $(M,w)$, and this is defined as usual for proposition symbols, negations and disjunctions. For formulae $\Diamond_{\geq k} \varphi$, we define that
\[
    (M,w) \models \Diamond_{\geq k} \varphi \text{ if and only if } (M,v) \models \varphi \text{ for at least $k$ successors $v$ of $w$.}
\]

The logic $\VGML$, as also studied in \cite{ahvonen_neurips}, consists of countably infinite disjunctions of $\GML$-formulae. More formally, \textbf{$\Pi$-formulae of $\VGML$} are constructed according to the following grammar:
\[
    \psi \coloncolonequals \varphi \,|\, \bigvee_{\varphi \in S} \varphi,
\]
where $\varphi$ is a $\Pi$-formula of $\GML$ and $S$ is an (at most) countably infinite set of $\Pi$-formulae of $\GML$.

Graded modal substitution calculus (or $\GMSC$), as studied in \cite{ahvonen_neurips} is a recursive extension of $\GML$. Let $\cT = \{X_1, \dots, X_k\} \subset \var$. The \textbf{$(\Pi, \cT)$-schemata of $\GMSC$} are constructed according to the following grammar:
\[
    \varphi \coloncolonequals p_i \,|\, X_i \,|\, \neg \varphi \,|\, \varphi \lor \varphi \,|\, \Diamond_{\geq k} \varphi,
\]
where $p_i \in \Pi$ and $X_i \in \cT$. A $(\Pi, \cT)$-program of $\GMSC$ consists of two lists
\[
    \begin{aligned}
        &X_{1}(0) \colonminus \varphi_1 \qquad &&X_1 \colonminus \psi_1 \\
        &\vdots &&\vdots \\
        &X_{n}(0) \colonminus \varphi_n &&X_n \colonminus \psi_n,
    \end{aligned}
\]
where $n \in \N$, each $\varphi_i$ is a $\Pi$-formula of $\GML$ and each $\psi_i$ is a $(\Pi, \cT)$-schema of $\GMSC$. Moreover, each program of $\GMSC$ is associated with a set $\cA \subseteq \cT$ of \textbf{appointed predicates}.

The semantics of $(\Pi, \cT)$-programs of $\GMSC$ is defined over pointed $\Pi$-models $(M,w)$, where $M = (W, R, V)$, as follows. We define that $(M,w) \models X_i^0$ if and only if $(M,w) \models \varphi_i$. Now, assume we have defined whether $(M,w) \models X_i^{t-1}$ for some $t \in \Z_+$ for all $i \in [n]$. For $(\Pi, \cT)$-schemata $\psi$ of $\GMSC$, we define whether $(M,w) \models \psi^t$ as follows. 
\begin{itemize}
    \item If $\psi \colonequals p_i \in \Pi$, then $(M,w) \models p_i^t$ if and only if $p_i \in V(w)$.
    \item If $\psi \colonequals \neg \varphi$, then $(M,w) \models (\neg \varphi)^t$ if and only if $(M,w) \not\models \varphi^t$.
    \item If $\psi \colonequals \varphi_1 \lor \varphi_2$, then $(M,w) \models (\varphi_1 \lor \varphi_2)^t$ if and only if $(M,w) \models \varphi_1^t$ or $(M,w) \models \varphi_2^t$.
    \item If $\psi \colonequals \Diamond_{\geq k} \varphi$, then $(M,w) \models (\Diamond_{\geq k} \varphi)^t$ if and only if $(M,v) \models \varphi^t$ for at least $k$ successors $v$ of $w$.
    \item If $\psi \colonequals X_i$, then $(M,w) \models X_i^t$ if and only if $(M,w) \models \psi_i^t$.
\end{itemize}
Finally, we define that the program \textbf{accepts} a pointed model $(M,w)$ if and only if $(M,w) \models X_j^t$ for some $t \in \N$ and some appointed predicate $X_j \in \cA$.

Monadic second-order logic (or $\MSO$) is a fragment of second-order logic (or SO) where only the quantification of unary second-order predicates is allowed.
Graded modal $\mu$-calculus (or $\cL_\mu$) \cite{DBLP:conf/cade/KupfermanSV02} is a fragment of $\MSO$ which extends graded modal logic by allowing formulae that calculate the least fixed point of a formula (or equivalently extend the modal $\mu$-calculus with the graded modality). This means constructing formulae of the type $\mu Z. \varphi$ whose semantics is defined in Kripke models such that the formula is true in the nodes belonging to the fixed-point of the function $f$ that maps each set $S$ of nodes (starting with the empty set $\emptyset$) to the set of nodes that satisfy $\varphi$ when the variable $Z$ is interpreted as $S$.

We say that a formula $\varphi$ of $\GML$, $\VGML$, the graded modal $\mu$-calculus or $\MSO$ \textbf{expresses} or \textbf{defines} a node property $P$ if $\varphi$ is true in exactly the pointed models in $P$. Likewise, we say that a program $\Lambda$ of $\GMSC$ \textbf{expresses} or \textbf{defines} a node property $P$ if $\Lambda$ accepts exactly the pointed models in $P$.

The \textbf{modal depth} of a formula of $\GML$, $\VGML$ or the graded modal $\mu$-calculus or a schema of $\GMSC$ is the maximum number of nested diamonds $\Diamond_{\geq k}$ in the formula or schema.

\subsection{Counting message-passing automata}

In this section, we give the definition and related notions for counting message passing automata.

Our definition for counting message passing automata is very similar to the one given in \cite{ahvonen_neurips} and based on message passing automata from \cite{Kuusisto13}. Given a finite set $\Pi$ of proposition symbols, a \textbf{counting message passing automaton} (or $\CMPA$) for $\Pi$ is a tuple $\cA = (Q, \pi, \delta, F)$, where 
\begin{itemize}
    \item $Q$ is a set of \textbf{states},
    \item $\pi \colon \cP(\Pi) \to Q$ is an \textbf{initialization function},
    \item $\delta \colon Q \times \cM(Q) \to Q$ is a \textbf{transition function} and
    \item $F \subseteq Q$ is a set of \textbf{accepting states}.
\end{itemize}
A \textbf{finite} $\CMPA$ (or $\FCMPA$) is a $\CMPA$ where the set of states is finite. We say that a $\CMPA$ is \textbf{bounded}, if there exists a $k \in \N$ such that for all states $q \in Q$ and all multisets $M \in \cM(Q)$ we have $\delta(q, M) = \delta(q, M_{|k})$. We call the smallest such $k$ the \textbf{bound} of $\cA$, and we say that $\cA$ is a $k$-$\CMPA$.

A $\CMPA$ $\cA = (Q, \pi, \delta, F)$ over $\Pi$ computes over a $\Pi$-model $M = (W, R, V)$ as follows. For each $t \in \N$, we define a \textbf{global configuration} $g_t \colon W \to Q$, which assigns a state to each node in $W$. In round $0$, we define that $g_0(w) \colonequals \pi(V(w))$ for each $w \in W$. Now assume we have defined $g_t$ for some $t \in \N$. We define that 
\[
    g_{t+1}(w) \colonequals \delta(g_t(w), \{\{ g_{t}(v) \mid \text{$v$ is a successor of $w$} \}\}).
\]
We say that $\cA$ \textbf{accepts} a pointed $\Pi$-model $(M,w)$ if $g_{t}(w) \in F$ for some $t \in \N$.

We say that a $\CMPA$ $\cA$ \textbf{expresses} or \textbf{defines} a node property $P$ if it accepts exactly the pointed models in $P$.

\subsection{Notions on equivalence}

Let $\cC[\Pi]$ be the class which contains all the $\CMPA$s over $\Pi$, all the $\GNN$s over $\Pi$, all the $\Pi$-programs of $\GMSC$ and all the $\Pi$-formulae of $\GML$, $\VGML$, the graded modal $\mu$-calculus and $\MSO$. We say that $x, y \in \cC[\Pi]$ are \textbf{equivalent} if they define exactly the same node property.

Let $\cA, \cB \subseteq \cC[\Pi]$. We say that $\cA$ and $\cB$ have the \textbf{same expressive power} if for each $x \in \cA$ there is an equivalent $y \in \cB$ and vice versa.

\section{Alternative proof of Theorem 4.1 in \cite{ahvonen_neurips}}\label{appendix: alternative MSO GNN reals and GMSC}

First, we recall Theorem 4.1 from \cite{ahvonen_neurips}.

\textbf{Theorem 4.1.}
\emph{Let $P$ be a property definable in $\MSO$. Then $P$ is expressible as a $\GNN[\R]$ if and only if it is definable in $\GMSC$.}

In this section, we provide an alternative proof for Theorem 4.1 in \cite{ahvonen_neurips}, showing that with respect to $\MSO$-properties, GNNs with reals are equally as expressive as $\GMSC$. (Note that the proof of Theorem 4.1 in \cite{ahvonen_neurips} is in the end of this section.) We begin by offering some necessary preliminary concepts relating to trees and automata. Then, we show in Proposition \ref{proposition: new prefix tree and extensions 2} that a formula $\varphi$ that is definable in $\MSO$ and $\VGML$ is true in a tree if and only if we can cut the tree off at some depth and then extend it from that depth arbitrarily without affecting the truth of the formula. Lemma \ref{MSO-definable properties} is an important tool in this section; it states that there exists an equivalent bounded $\FCMPA$ for each formula that is definable in $\MSO$ and $\VGML$. The proof of this lemma is split into multiple lemmas. The construction of an automaton with a fixed point accepting (and rejecting) condition is given according to subformulae of $\MSO$ in the proof of Theorem \ref{theorem: Buchi distributed MSO automaton} (the preceding technical lemmas \ref{properness automaton} and \ref{lemma: non-deterministic to deterministic} are used to deal with quantifiers). This is then transformed into an ordinary $\FCMPA$ by lemmas \ref{lemma: non-deterministic MSO automaton} and \ref{lemma: non-deterministic to deterministic} to prove Theorem 4.1 of \cite{ahvonen_neurips}.

Given a pointed Kripke model $M = (W, R, V)$ and $w, v \in W$, a \textbf{walk from $w$ to $v$} is a tuple $(v_{1}, \dots, v_{k+1})$ of nodes in $M$ such that $(v_{i}, v_{i+1}) \in R$ for each $i \in [k]$, $v_1 = w$ and $v_{k+1} = v$. Moreover, a \textbf{path from $w$ to $v$} is a walk $(v_1, \ldots, v_{k+1})$ from $w$ to $v$ such that for each $i, j \in [k+1]$ and $i \neq j$, we have $v_i \neq v_j$. The \textbf{length} of the walk (or resp., path) $(v_1, \ldots, v_{k+1})$ is $k$.

A \textbf{tree} is a (possibly infinite) Kripke model $T = (W, R, V)$ that contains exactly one node $w$ (called the \textbf{root}) such that $(v,w) \notin R$ for all $v \in W$, and where for each $v \in W$ there exists exactly one walk from $w$ to $v$ in $T$ (we call the length of this walk the \textbf{distance} from $w$ to $v$; recall that a node can occur repeatedly in a walk, but not in a path).
A \textbf{leaf} in a tree $T$ is a node $v$ that does not have any successors. A \textbf{rooted tree} is a pointed model $(T, w)$ where $T$ is a tree and $w$ is the root of $T$. 
The \textbf{depth} of a tree $T$ is the length of the longest path in $T$ (note that the longest path inevitably starts from the root and ends in a leaf). If there is no longest path in $T$, then the depth of $T$ is $\infty$.

Given a rooted tree $(T, w)$, we let $(T_k, w)$ denote the \textbf{$k$-prefix tree} of $(T, w)$, i.e., the restriction of $(T, w)$, where $v$ is in the domain of $T_{k}$ if and only if there is an at most $k$-long path from $w$ to $v$. This is consistent with $T_{k}$ as defined in \cite{ahvonen_neurips}. Similar to \cite{ahvonen_neurips}, we define that an \textbf{extension} of $(T_k, w)$ is then any $(T', w)$ whose $k$-prefix tree is $(T_{k}, w)$, that is, $(T', w)$ is obtained from $(T_k, w)$ by extending $(T_k, w)$ from the nodes at distance $k$ from $w$ by attaching subtrees, but not from any node at distance $\ell < k$ from $w$.
Likewise, a \textbf{$k$-extension} of a tree $(T, w)$ is any extension of $(T_{k}, w)$.
Note that if the depth of $(T, w)$ is $\ell < k$, then $(T, w) = (T_k, w) = (T', w)$ 
for all $k$-extensions $(T', w)$ of $(T, w)$.
Given an $\MSO$-formula $\varphi(x)$ (where the first-order variable $x$ is the only free variable), we say that a tree $(T, w)$ is \textbf{$k$-extendable w.r.t $\varphi(x)$} if
for \emph{all} $k$-extensions $(T', w)$ of $(T, w)$, we have $T' \models \varphi(w)$.

The following proposition shows that for any $\MSO$-formula $\varphi(x)$ that is equivalent to a formula of $\VGML$, the formula $\varphi(x)$ being true in a tree $T$ implies that $\varphi(x)$ is true in all $k$-extensions of $T$ for some $k \in \N$. 
\begin{proposition}\label{proposition: new prefix tree and extensions 2}
    Let $\varphi(x)$ be an $\MSO$-formula definable in $\VGML$. Then for all rooted trees $(T,w)$: if $T \models \varphi(w)$, there is a $k \in \N$ such that $(T,w)$ is $k$-extendable w.r.t. $\varphi(x)$.
\end{proposition}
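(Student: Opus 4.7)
The plan is to unfold the $\VGML$-definability to extract a single $\GML$-witness for the truth of $\varphi(x)$ at $w$, and then exploit the local (depth-bounded) nature of $\GML$ to argue that this witness survives every extension beyond its modal depth.

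First I would invoke the hypothesis to fix a $\VGML$-formula $\psi$ equivalent to $\varphi(x)$. By the grammar of $\VGML$, $\psi$ is either a $\GML$-formula or a countable disjunction $\bigvee_{\chi \in S} \chi$ of such formulae; in either case, assuming $T \models \varphi(w)$, i.e.\ $(T,w) \models \psi$, yields some $\chi \in S$ (or $\chi = \psi$ in the degenerate case) with $(T,w) \models \chi$. I would then set $k := \md(\chi)$, the modal depth of the chosen witness $\chi$, and claim that this $k$ works.

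The core of the argument is the following locality lemma, which I would prove by induction on the structure of $\GML$-formulae: for every $\GML$-formula $\chi'$ of modal depth $\le k$ and every rooted tree $(T,w)$, the truth value of $\chi'$ at $w$ in $T$ coincides with its truth value at $w$ in every $k$-extension $(T',w)$ of $(T,w)$. The base case (a proposition symbol $p_i$) is immediate since the valuation at $w$ is unchanged. The Boolean cases are routine. For the modal case $\Diamond_{\ge m}\chi''$, note that the set of successors of $w$ is identical in $T$ and $T'$ (a $k$-extension with $k\ge 1$ only appends subtrees at depth $k$, leaving the structure at depth $<k$ intact); furthermore, for each successor $v$ of $w$, the subtree rooted at $v$ in $T'$ is a $(k-1)$-extension of the subtree rooted at $v$ in $T$, so the induction hypothesis (applied with depth $k-1$ to $\chi''$, which has modal depth $\le k-1$) gives $(T,v) \models \chi'' \iff (T',v) \models \chi''$. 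The counts of witnessing successors therefore agree, and the modal step follows. The case $k=0$ is trivial since only propositional content matters and the valuation at $w$ is preserved.

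Applying this locality lemma to the chosen $\chi$ gives $(T',w) \models \chi$ for every $k$-extension $(T',w)$, hence $(T',w) \models \psi$, hence $T' \models \varphi(w)$, establishing $k$-extendability. I expect no serious obstacle: the only subtlety is making the inductive statement strong enough to carry through the $\Diamond_{\ge m}$ case, which I handle by keeping the depth parameter aligned with the modal depth of the current subformula rather than fixing it at the outer $k$.
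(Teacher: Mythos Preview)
Your proposal is correct and follows essentially the same approach as the paper: extract a single $\GML$-disjunct $\chi$ (the paper's $\psi'$) witnessing $(T,w)\models\psi$, set $k$ to its modal depth, and use depth-$k$ locality of $\GML$ to transfer truth to every $k$-extension. The only difference is that you spell out the locality lemma by a structural induction, whereas the paper simply invokes the well-known fact that a $\GML$-formula of modal depth $k$ cannot distinguish a tree from any of its $k$-extensions.
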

\begin{proof}
    Let $(T, w)$ be a rooted tree. Let $\psi$ be an $\VGML$-formula that defines $\varphi(x)$. Assume that $T \models \varphi(w)$. Thus $T, w \models \psi$ i.e., $T, w \models \psi'$ for some disjunct $\psi'$ of $\psi$. Choose $k$, where $k$ is the modal depth of $\psi'$. 
    Let $(T', w)$ be an extension of $(T_k, w)$.
    Now, $T, w \models \psi'$ implies $T', w \models \psi'$ and thus we have $T', w \models \psi$ and ultimately $T' \models \varphi(w)$, since the modal depth of $\psi'$ is $k$. Therefore, $(T, w)$ is $k$-extendable w.r.t. $\varphi$.
\end{proof}

The main lemma of this section is proven in the sections below and is stated as follows. 
\begin{lemma}\label{MSO-definable properties}
    Given an $\MSO$-formula $\varphi(x)$ definable by an $\VGML$-formula, we can construct a bounded carefree $\FCMPA$ $\cA$ such that for all rooted trees $(T, w)$ we have
    \[
    \text{$(T, w)$ is accepted by $\cA$} \iff T \models \varphi(w).
    \]
\end{lemma}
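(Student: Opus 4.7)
My plan is to follow the three-stage roadmap sketched in the preamble of Section~\ref{appendix: alternative MSO GNN reals and GMSC}. First I would construct, by induction on the $\MSO$-structure of $\varphi(x)$, a distributed automaton with a B\"uchi-style fixed-point acceptance condition recognising exactly the pointed trees $(T,w)$ with $T \models \varphi(w)$. Atomic formulae $p_i(x)$ simply read the valuation; boolean connectives are handled by product and complementation constructions; second-order existential quantifiers $\exists X$ are simulated by nondeterministically guessing a unary marking on every node of the tree; first-order existential quantifiers $\exists x$ are simulated by guessing a singleton marking. In both quantifier cases the properness automaton (the first of the two technical lemmas preceding the main theorem) is composed in to verify that the guessed marking is legitimate (a valid subset for $\exists X$, and a genuine singleton for $\exists x$). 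Putting all inductive cases together yields the fixed-point style distributed $\MSO$ automaton.

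Second, I would translate this B\"uchi-style object into an ordinary $\FCMPA$ with a finite set of accepting states, in two steps: the non-deterministic $\MSO$ automaton lemma reshapes the acceptance condition into a nondeterministic $\FCMPA$, and then the nondeterministic-to-deterministic lemma determinises it via a subset-style construction lifted pointwise to the message-passing setting, where states become sets of states and the aggregation is interpreted coordinatewise.

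Third, and it is here that the hypothesis that $\varphi(x)$ is definable by some $\VGML$-formula $\psi = \bigvee_i \psi_i$ becomes essential, I would establish boundedness and the carefree property. By Proposition~\ref{proposition: new prefix tree and extensions 2}, whenever $T \models \varphi(w)$ there is some $k \in \N$ such that every $k$-extension of $(T,w)$ still satisfies $\varphi$, so acceptance is witnessed already by a finite-depth view of the tree. Combined with the fact that each disjunct $\psi_i$ uses only finitely many counting thresholds, this should yield a uniform neighbour-count bound $k$ such that every transition depends only on $M_{|k}$, producing the bounded carefree $\FCMPA$ required.

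The main obstacle I anticipate is precisely this third step: a priori the $k$ supplied by Proposition~\ref{proposition: new prefix tree and extensions 2} depends on the input tree, whereas a bounded $\FCMPA$ needs a single uniform $k$ valid for all inputs. The argument must exploit that the determinised automaton has a finite state space and that $\MSO$-definability forces only finitely many equivalence classes of local neighbour-type profiles to matter for the transition function, so that a single global threshold suffices; the carefree property should then fall out of the construction, since the automaton never distinguishes neighbours beyond these finitely many equivalence classes.
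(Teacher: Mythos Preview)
Your three-stage outline matches the paper's architecture, but you have misplaced the role of the $\VGML$ hypothesis, and this is a genuine gap rather than a detail. Boundedness and carefreeness are \emph{not} established at the end from the $\VGML$ data; they are structural invariants maintained through every step of the inductive construction in Stage~1. The base automata have bound~$1$; negation preserves the bound; conjunction takes the maximum; the quantifier steps use a power-set construction whose bound becomes $k\lvert Q\rvert$ when the input bound is $k$ (this is spelled out in Lemma~\ref{lemma: non-deterministic to deterministic}). Forgetfulness and quasi-acyclicity are likewise checked at each inductive step. The counting thresholds in the disjuncts $\psi_i$ play no role whatsoever, since the automaton is built from the $\MSO$ syntax, not from $\psi$. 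You are also conflating two unrelated parameters both called $k$: the depth $k$ supplied by Proposition~\ref{proposition: new prefix tree and extensions 2} (which is, and may remain, tree-dependent) and the neighbour-count bound of a $k$-$\CMPA$ (which must be uniform and comes purely from the construction).

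The $\VGML$ hypothesis enters in Stage~2, inside Lemma~\ref{lemma: non-deterministic MSO automaton}, and its purpose is the passage from the fixed-point acceptance condition to ordinary reachability acceptance. The construction there re-initialises each node non-deterministically with the set $Q_P$ of all fixed-point states $\cA_\varphi$ can reach at a root carrying label $P$. Proposition~\ref{proposition: new prefix tree and extensions 2} then guarantees that when $T \models \varphi(w)$, \emph{every} $k$-extension of $(T,w)$ still satisfies $\varphi$; hence whichever fixed-point states the level-$k$ nodes guess (each corresponding to some extension of $T_k$), the root reaches an accepting state in round $k$. This is precisely what yields $k$-omnipresent acceptance, which the subsequent determinisation of Lemma~\ref{lemma: non-deterministic to deterministic} (with accepting set $\{\,q \in Q' \mid q \subseteq F\,\}$) converts into ordinary acceptance. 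Without this argument the Stage-1 automaton may stabilise only after unboundedly many rounds on infinite trees, and no reachability condition would be sound. One minor correction: the properness automaton of Lemma~\ref{properness automaton} is composed in only for first-order quantifiers; second-order quantifiers need no legitimacy check, as any subset is a valid witness.
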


\subsection{An informal strategy}

We begin by defining some useful tools: 
\begin{itemize}
    \item non-deterministic automata, which modify ordinary automata by giving each node a set of possible initial states to choose from instead of just one,
    \item automata with rejecting states, which means that the automaton can explicitly reject nodes instead of just not accepting them,
    \item quasi-acyclic automata, meaning that in each round, the next state of each node is either the same state or one that the node has not previously visited,
    \item forgetful automata, meaning that state transitions in each node do not depend on the current state of that node, and
    \item fixed point accepting conditions, which means that a node is accepted if it eventually reaches an accepting state and stays in the same state in all subsequent rounds.
\end{itemize}
Then we proceed by constructing the automaton mentioned in Lemma \ref{MSO-definable properties} as follows. Considering an $\MSO$-formula $\varphi$ definable by an $\VGML$-formula, we do the following:
\begin{itemize}
\item We construct an equivalent bounded $\FCMPA$ in Theorem \ref{theorem: Buchi distributed MSO automaton} that is also quasi-acyclic  and forgetful, with fixed point conditions for accepting and rejecting pointed models. The automaton is built via induction over the structure of $\varphi$.
    \begin{itemize}
    \item We start by constructing automata for atomic formulas. Negations and conjunctions are handled by simple manipulations of the automata obtained in previous steps.
    \item Existential quantifiers are handled in two steps.
    \begin{enumerate}
        \item For first-order existential quantifiers, we pre-emptively combine the automaton received in previous steps with one that checks that the quantified first-order variable $y$ is assigned to precisely one node in the tree (this automaton is constructed in Lemma \ref{properness automaton}; we skip this step in the case of second-order existential quantifiers).
        \item We turn the automaton into one where each node guesses in each round whether the quantified variable is true or not. The automaton keeps track of all possible outcomes. This utilizes the well-known power set construction, which is also used in Lemma \ref{lemma: non-deterministic to deterministic}.
    \end{enumerate}
    \end{itemize}
\item Finally, the obtained automaton with fixed point conditions for accepting and rejecting is transformed into an equivalent non-deterministic $\FCMPA$ with an omnipresent accepting condition (meaning that every run of the automaton accepts in the same round) in Lemma \ref{lemma: non-deterministic MSO automaton}, which we then transform into an equivalent deterministic $\FCMPA$ with just the ordinary accepting condition using Lemma \ref{lemma: non-deterministic to deterministic}. This is the final step in the proof of Lemma \ref{MSO-definable properties}.
\end{itemize}

We will use a modification of $\CMPA$s to construct the automaton. Intuitively, a modified $\CMPA$ has a unique transition function for each set $P \subseteq \Pi$ of proposition symbols. For the rest of the computation the automaton uses in each node the transition function whose label corresponds to the set of proposition symbols that are true in that node. More formally, such a $\CMPA$ is a tuple $\cA = (Q, \pi, (\delta_P)_{P \subseteq \Pi}, F)$ over the vocabulary $\Pi$. The computation of $\cA$ is defined in Kripke models $M$ over $\Pi$ as follows. The initial step is analogous to ordinary $\CMPA$s. For the rest of the computation, in each node $w$ in $M$, the automaton $\cA$ uses the transition function $\delta_P \colon Q \times \cM(Q) \to Q$ analogously to ordinary $\CMPA$s, where $P$ is the set of proposition symbols that are true in $w$. Note that modified $\CMPA$s and ordinary $\CMPA$s have the same expressive power (and the same holds for other subclasses, e.g., $\FCMPA$s), so there is no true distinction between the definitions. For the rest of this section, $\CMPA$ will always refer to a modified one (the same holds for other subclasses of $\CMPA$s as well). During this section we might call $\CMPA$s simply automata, since we do not consider other types of automata in this section.

\subsection{Notions on automata}

In order to prove Lemma \ref{MSO-definable properties}, we need additional modifications and properties of automata that arise naturally in our construction. The two modifications are non-deterministic automata and automata with rejecting states. The two central properties of automata are quasi-acyclicity, which means that no loops are allowed in transition diagrams apart from self-loops, and forgetfulness, which means that a node's transitions do not depend on its own previous state.

A \textbf{non-deterministic $\CMPA$} for $\Pi$ is a tuple $\cA = (Q, \pi, (\delta_P)_{P \subseteq \Pi}, F)$, where $Q$, $\delta_{P}$ and $F$ are all defined analogously to deterministic $\CMPA$s and $\pi \colon \cP(\Pi) \to \cP(Q) \setminus \{\emptyset\}$. A non-deterministic automaton runs like a deterministic one, but in the initialization round the initialization function $\pi$ gives a set of states, whence the automaton chooses a state to proceed into. \textbf{Global configurations} are defined as before, except that for $g_{0}$ we define that $g_{0}(v) \in \pi(P)$ where $P$ is the set of proposition symbols true in $v$. (Notice that different choices in initial configurations define potentially different global configurations in subsequent rounds.) Given a pointed Kripke model $(M, w)$ over $\Pi$, a \textbf{run of $\cA$ in $(M, w)$} is a sequence $(g_{n}(w))_{n \in \N}$ (where $g_{0}(v) \in \pi(V(v))$ for each $v \in W$).

An \textbf{accepting run of $\cA$ in $(M, w)$} is a run of $\cA$ in $(M, w)$ such that $g_{n}(w) \in F$ for some $n \in \N$. A non-deterministic automaton \textbf{accepts} a pointed Kripke model $(M, w)$ if there exists an accepting run.

We also define a stronger definition for accepting where each run accepts simultaneously; we say that a non-deterministic automaton 
\textbf{$k$-omnipresently accepts} a pointed Kripke model $(M,w)$, if there is a round $k \in \N$ such that every run of the automaton is in an accepting state at $w$ in round $k$, and $k$-omnipresent rejecting is defined analogously.
If it is clear from the context, a deterministic $\CMPA$ is simply referred to as a $\CMPA$. We will make a clear distinction when discussing non-deterministic $\CMPA$s. The notion of non-determinism extends for other variants of $\CMPA$s such as $\FCMPA$s.

Next, we define a variant of automata that can also reject pointed models. A $\CMPA$ (deterministic or not) \textbf{with rejecting states} is a tuple $(Q, \pi, (\delta_P)_{P \subseteq \Pi}, F, F')$, where $F' \subseteq Q \setminus F$ is a set of \textbf{rejecting states} and $(Q, \pi, (\delta_P)_{P \subseteq \Pi}, F)$ is a $\CMPA$. Moreover, acceptance is defined analogously to an automaton which includes only accepting states. Such an automaton \textbf{rejects} a pointed Kripke model $(M, w)$ if \textbf{(1)} it is deterministic and $(M, w)$ visits a rejecting state and does not visit an accepting state, or \textbf{(2)} it is non-deterministic and there exists at least one rejecting run and no accepting run in $(M, w)$.\footnote{Note that \emph{rejecting} and \emph{not accepting} (resp., \emph{accepting} and \emph{not rejecting}) are two distinct phenomena; even if an automaton does not accept a pointed model, it does not follow that it necessarily rejects the pointed model, although rejecting does imply not accepting.}

When we say that a deterministic automaton \textbf{accepts} or \textbf{rejects $(M,w)$ in round $i$}, we mean that the automaton is in an accepting state or a rejecting state respectively at $w$ in round $i$.
We say that a non-deterministic automaton \textbf{accepts $(M,w)$ in round $i$} if there is a run that is in an accepting state at $w$ in round $i$, and it \textbf{rejects $(M,w)$ in round $i$} if there is a run that is in a rejecting state at $w$ in round $i$ and no run that is in an accepting state at $w$ in round $i$.

We say that a deterministic automaton \textbf{fixed point accepts} (resp., \textbf{fixed point rejects}) a pointed model $(M, w)$ if there exists a $k \in \N$ such that the state of $w$ in round $k$ is the same as in round $k'$ for all $k' > k$, and the state of $w$ in round $k$ is an accepting state (resp., a rejecting state). 
Respectively, a non-deterministic automaton \textbf{fixed point accepts} (or resp. \textbf{fixed point rejects}) a pointed model $(M, w)$ if there exists a run such that $w$ is fixed point accepted (or resp., there is a run where $w$ is fixed point rejected and no run where $w$ is fixed point accepted).

Next, we define quasi-acyclicity and forgetfulness. Informally, an automaton is quasi-acyclic if its transition functions do not form any cycles (except self-loops are allowed), and it is forgetful if it cannot remember a node's previous state. More formally, an automaton (deterministic or not) is \textbf{quasi-acyclic} if for each sequence of states $g_{i}(w), \ldots, g_{i+k}(w)$ during a run of the automaton in any pointed Kripke model $(M,w)$, we have that if $g_{i}(w) = g_{i+k}(w)$, then $g_{i+j-1}(w) = g_{i+j}(w)$ for all $j \in [k]$. An automaton (deterministic or not) is said to be \textbf{forgetful} if for all $P \subseteq \Pi$ and all $M \in \cM(Q)$, we have $\delta_P(q', M) = \delta_P(q'', M)$ for all states $q', q'' \in Q$. In other words the transition function $\delta_P$ is simply a function of the type $\cM(Q) \to Q$ (note that the initial state of a node is still remembered by the choice of $\delta_{P}$). An automaton that is both quasi-acyclic and forgetful is called \textbf{carefree}.

\subsection{Auxiliary results}

In this section, we prove two auxiliary lemmas. The first one shows how to construct an automaton that checks whether first-order variables are assigned correctly in a Kripke model. The second one shows how to translate a non-deterministic forgetful automaton into a deterministic one, preserving quasi-acyclicity.

Given a set $\cX$ of first-order and monadic second-order variables and a rooted tree $(T, w)$, an \textbf{interpretation} of $(T, w)$ over $\cX$ is a rooted tree $(T^*, w)$, such that each first-order variable in $\cX$ is interpreted as a singleton in $(T^*, w)$ and each monadic second-order variable in $\cX$ is interpreted in an arbitrary way in $(T^*, w)$. 
More formally, if $T = (W, R, V)$ is a $\Pi$-model, then $T^{*} = (W, R, V^{*})$ is any $\Pi \cup \Pi_{\cX}$-model where $\Pi_{\cX} = \{\, p_{x} \mid x \in \cX \,\}$. The valuation can be defined in any way such that $V^*(w) \cap (\Pi \setminus \Pi_{\cX}) = V(w)$ and $\abs{\{\, v \in W \mid p_{x} \in V^{*}(v) \,\}} = 1$ for all \emph{first-order} variables $x \in \cX$.
An \textbf{unrestricted interpretation} of $(T, w)$ over $\cX$ is otherwise the same as an interpretation, but we relax the condition for first-order variables.

The following lemma shows that we can construct an automaton to check that an unrestricted interpretation of a rooted tree is an interpretation in the regular sense, i.e., that each first-order variable is interpreted as a singleton.
\begin{lemma}\label{properness automaton}
    Given a set $\cX$ of first order variables, we can construct a carefree $2$-$\FCMPA^{r}$ $\cA$ such that for all rooted trees $(T, w)$ and all unrestricted interpretations $(T^*,w)$ over $\cX$:
    \begin{enumerate}
        \item $\cA$ accepts $(T^*,w)$ in round $i$ if and only if each variable $x \in \cX$ is interpreted as a singleton in $(T^*_{i}, w)$,
        \item $\cA$ rejects $(T^*, w)$ in round $i$ if and only if any variable $x \in \cX$ is interpreted as a non-empty set in $(T^*_i, w)$ that is not a singleton. 
    \end{enumerate}
\end{lemma}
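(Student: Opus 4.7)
The plan is to track, for each first-order variable $x \in \cX$, a ``capped count'' at each node of how many nodes in its $i$-ball carry the label $p_x$, saturated at $2$. Concretely, I take the state set $Q \colonequals \{0,1,2\}^{\cX}$, where the value $2$ encodes ``at least two''. The initialization function $\pi \colon \cP(\Pi \cup \Pi_{\cX}) \to Q$ maps $P$ to the tuple whose $x$-coordinate is $1$ if $p_x \in P$ and $0$ otherwise. For each $P \subseteq \Pi \cup \Pi_\cX$, I define the transition $\delta_P \colon \cM(Q) \to Q$ (note the type: no state argument, which bakes in forgetfulness) so that the $x$-coordinate of $\delta_P(M)$ equals the sum of the $x$-coordinates of the elements of $M$, plus $1$ if $p_x \in P$, truncated at $2$. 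The accepting set is $F \colonequals \{q \in Q \mid q_x = 1 \text{ for every } x \in \cX\}$ and the rejecting set is $F' \colonequals \{q \in Q \mid q_x = 2 \text{ for some } x \in \cX\}$; these are manifestly disjoint.

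Correctness is then an easy induction on $i$: for every node $v$ of $(T^{*}, w)$ and every $x \in \cX$, the $x$-coordinate of $g_i(v)$ equals $\min\{2,\, n_{i,x}(v)\}$, where $n_{i,x}(v)$ counts nodes within distance $i$ of $v$ that satisfy $p_x$. Specialising to $v = w$, the tuple $g_i(w)$ encodes, for each $x$, the cardinality of the interpretation of $x$ in $T^*_i$, capped at $2$. Consequently $g_i(w) \in F$ iff each $p_x$ is realised by exactly one node in $T^*_i$, which gives (1); and $g_i(w) \in F'$ iff some $p_x$ is realised by at least two nodes in $T^*_i$, which gives (2).

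It remains to verify the four structural properties. Forgetfulness is built in, since $\delta_P$ has no state argument. Finiteness follows from $|Q| = 3^{|\cX|}$ (with $\cX$ finite). For the bound $k = 2$, observe that any $s \in M$ with $s_x = 2$ already forces the $x$-coordinate of $\delta_P(M)$ to $2$, irrespective of its multiplicity, while states with $s_x \in \{0,1\}$ contribute at most $1$ each; keeping at most two copies of each distinct state therefore suffices to decide whether the capped sum lands at $0$, $1$, or $2$, so $\delta_P(M) = \delta_P(M_{|2})$.

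The most delicate property is quasi-acyclicity, but it reduces to monotonicity: for each coordinate $x$, the true count $n_{i,x}(w)$ is non-decreasing in $i$, and truncation at $2$ preserves this, so every coordinate of $g_i(w)$ is non-decreasing in $i$. Hence if $g_i(w) = g_{i+k}(w)$, each intermediate tuple is pinched componentwise between equal values and must coincide with both, which is the required condition. I expect the main conceptual hurdle to be the observation that capping at $2$ simultaneously delivers three things at once: it discriminates singletons from larger sets for the acceptance and rejection conditions, it yields the bound $k = 2$ on the aggregation, and it enforces the monotonicity that drives quasi-acyclicity; once this is seen, the verification of the four structural properties becomes routine.
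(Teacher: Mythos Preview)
Your construction is identical to the paper's: states $\{0,1,2\}^{\cX}$ (the paper writes these as $q_{x_1 \geq k_1,\dots,x_n \geq k_n}$), the same initialization, the same capped-sum transition, and the same accepting/rejecting sets. Your verification is in fact more detailed than the paper's, which dispatches correctness with ``easy to show by induction over the run'' and the structural properties with ``clearly carefree by its construction''; your explicit monotonicity argument for quasi-acyclicity and your justification of the bound $k=2$ fill in exactly what the paper omits. One small remark: your quasi-acyclicity argument routes through the semantic count $n_{i,x}$, which equals the state only in trees, whereas the definition of quasi-acyclic is phrased for arbitrary pointed Kripke models; the fix is immediate (the transition is componentwise monotone in the neighbours' states, so $g_{i+1}(v) \geq g_i(v)$ holds by induction in any model), but you may wish to phrase it that way.
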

\begin{proof}
    Intuitively, in step $0$ each node makes a note that each variable in its signature has been seen. In step $i$, a node sees how many of each variable its out-neighbours have seen at depth $i-1$; all these numbers are independent of each other, because the branches starting from each out-neighbour are pairwise disjoint. Thus, each node can combine its initial state with how many of each (first-order) variable its out-neighbours have seen at depth $i-1$, and conclude 1) whether each variable has been seen within its $i$-depth neighbourhood and 2) whether any variable has been seen more than once within its $i$-depth neighbourhood. The automaton accepts if each variable has been seen exactly once and rejects if at least one variable has been seen more than once.
    
    More formally, let $\cX = \{x_{1}, \dots, x_{n}\}$ be the set of first-order variables. We construct an automaton $\cA_{\text{proper }\cX} = (Q, \pi, (\delta_{P})_{P \subseteq \Pi}, F, F')$ over $\Pi \cup \Pi_{\cX}$ as follows.
    \begin{itemize}
        \item The set of states is $Q = \{\, q_{x_{1} \geq k_{1}, \dots, x_{n} \geq k_{n}} \mid k_{1}, \dots, k_{n} \in \{0, 1, 2\} \,\}$, i.e., the state of a node $w$ at time $i$ keeps track of how many times each variable has been encountered in the $i$-prefix tree $T_{i}$ generated by $w$; $0$, $1$, or at least $2$ times. For example, if $n = 3$ then $q_{x_{1} \geq 0, \, x_{2} \geq 2, \, x_{3} \geq 1}$ says that $x_{1}$ has not been encountered, $x_{2}$ has been encountered at least twice and $x_{3}$ has been encountered just once.
        \item The initialization function is defined as follows. For all $P \subseteq \Pi \cup \Pi_{\cX}$, $\pi(P) = q_{x_{1} \geq k_{1}, \dots, x_{n} \geq k_{n}}$ where $k_{i} = 1$ iff $p_{x_{i}} \in P$ and $k_{i} = 0$ iff $p_{x_{i}} \notin P$ for all $i \in [n]$.
        \item The transition functions are defined as follows. Let $\pi(P) = q_{x_{1} \geq k_{1}, \dots, x_{n} \geq k_{n}}$ and let
        \[
            M = \{\{ q_{x_{1} \geq k^{1}_{1}, \dots, x_{n} \geq k^{1}_{n}}, \dots, q_{x_{1} \geq k^{m}_{1}, \dots, x_{n} \geq k^{m}_{n}} \}\} \in \cM(Q)
        \]
        be a multiset of $m$ states. Then we define 
        \[
        \delta_{P}(M) = q_{x_{1} \geq \ell_{1}, \dots, x_{n} \geq \ell_{n}}, \text{ where } \ell_{i} = \min\{k_{i} + \sum_{j = 1}^{m} k^{j}_{i}, 2\}.
        \]
        \item The set of accepting states is the singleton $F = \{q_{x_{1} \geq 1, \dots, x_{n} \geq 1}\}$ and the set of rejecting states is $F' = \{\, q_{x_{1} \geq m_1, \dots, x_{n} \geq m_n} \mid (m_1, \ldots, m_n) \in \{0, 1, 2\}^n \setminus \{0,1\}^n \,\}$. 
    \end{itemize}
    It is easy to show by induction over the run of the automaton that the following holds for all rooted trees $(T, w)$ and all unrestricted interpretations $(T^*,w)$ over $\cX$:
    \begin{enumerate}
        \item $\cA$ accepts $(T^*,w)$ in round $i$ if and only if in $(T^*_{i}, w)$ each variable $x \in \cX$ is interpreted as a singleton,
        \item $\cA$ rejects $(T^*, w)$ in round $i$ if and only if in $(T^*_i, w)$ any variable $x \in \cX$ is interpreted as a non-empty set that is not a singleton. 
    \end{enumerate}
    The automaton is clearly carefree by its construction.
\end{proof}

In the following lemma, we show how given a non-deterministic automaton with rejecting states, we can construct an equivalent deterministic automaton with rejecting states.
\begin{lemma}\label{lemma: non-deterministic to deterministic}
    Given a non-deterministic forgetful bounded $\FCMPA^r$ $\cA$, we can construct a deterministic forgetful bounded $\FCMPA^r$ $\cA'$ such that for all rooted trees $(T, w)$ and all $i \in \N$:
    \begin{enumerate}
        \item $\cA$ accepts (resp. fixed point accepts) $(T, w)$ iff $\cA'$ accepts (resp. fixed point accepts) $(T, w)$.
        \item $\cA$ rejects (resp. fixed point rejects) $(T, w)$ iff $\cA'$ rejects (resp. fixed point rejects) $(T, w)$.
    \end{enumerate}
    If $\cA$ is quasi-acyclic, then $\cA'$ is also quasi-acyclic. Moreover, $\cA'$ can be modified such that $\cA$ $k$-omnipresently accepts (resp. $k$-omnipresently rejects) $(T, w)$ iff $\cA'$ accepts (resp. rejects) $(T, w)$.
\end{lemma}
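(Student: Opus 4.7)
The plan is to carry out the classical subset (power set) construction, exploiting forgetfulness of $\cA$ and independence of sibling subtrees in a rooted tree. Writing $\cA = (Q, \pi, (\delta_P)_{P \subseteq \Pi}, F, F')$, I would set $\cA' = (\cP(Q), \pi', (\delta'_P)_{P \subseteq \Pi}, F_1, F'_1)$ with $\pi'(P) = \pi(P)$, transition
\[
\delta'_P(S, \{\{S_1,\dots,S_m\}\}) \colonequals \bigl\{\, \delta_P(M) \,\bigm|\, M = \{\{q_1,\dots,q_m\}\}, \; q_i \in S_i \text{ for each } i \,\bigr\},
\]
and $F_1 = \{\, S \mid S \cap F \neq \emptyset \,\}$, $F'_1 = \{\, S \mid S \cap F' \neq \emptyset \text{ and } S \cap F = \emptyset \,\}$. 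Forgetfulness of $\cA$ makes the right-hand side independent of $S$, so $\cA'$ is forgetful; $k$-boundedness of $\cA$ transfers into a bound of at most $k \cdot \abs{Q}$ for $\cA'$, since once a subset $S^*$ appears that many times in the neighbour multiset every distribution over $S^*$ with multiplicities up to $k$ becomes realizable.

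The key invariant, proved by induction on $t$, is that for every node $v$ of a rooted tree $(T,w)$ the state of $\cA'$ at $v$ in round $t$ equals $S^t_v \colonequals \{\, g^r_t(v) \mid r \text{ a run of } \cA \,\}$; the inductive step uses that any choice-function $u \mapsto q_u \in S^t_u$ over the children $u$ of $v$ is realized by a single joint run of $\cA$, obtained by gluing together the witnesses on the disjoint subtrees. From this the ordinary accept and reject clauses of (1)--(2) are immediate ($\cA$ accepts iff some $S^t_w$ meets $F$; $\cA$ rejects iff some $S^t_w$ meets $F'$ while none meets $F$), and the omnipresent modification follows by redefining $F_1 \colonequals \{\, S \mid \emptyset \neq S \subseteq F \,\}$ and $F'_1$ analogously.

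For quasi-acyclicity preservation the plain subset construction is not enough, because the sequence $S^0_w, S^1_w, \dots$ can oscillate even when every individual run of $\cA$ is quasi-acyclic; I would resolve this by tagging $\cA'$'s state with a round-counter truncated at $\abs{Q}+1$ so that early rounds become distinguishable, which together with the fact that for quasi-acyclic $\cA$ each run stabilizes within $\abs{Q}$ rounds at each node forces $(t, S^t_v)$ to be constant past round $\abs{Q}$ and makes $\cA'$ quasi-acyclic overall.

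The part I expect to be the main obstacle is the fixed-point clauses in general: a single run of $\cA$ can stabilize at an accepting $q^* \in F$ while other runs oscillate, so $S^t_w$ fails to stabilize and the bare deterministic $\cA'$ does not fixed-point accept. I would handle this by further enriching $\cA'$'s state with data sufficient to certify that some run has stabilized at a particular $q$ --- for instance, a capped counter per $q \in Q$ recording the longest current streak of $q$ being present in $S^t_v$, with cap large enough to detect periodic behaviour in the finite space $\cP(Q)^\omega$. The converse direction, from saturation back to a genuine stable run, requires a K\H onig/Tychonoff compactness argument on the product space $\prod_{v \in W} \pi(V(v))$ of initial-choice functions --- a product of finite subsets of $Q$, hence compact --- to extract a single run that keeps $w$ at $q^*$ forever. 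In the quasi-acyclic case this compactness is not needed, since every run stabilizes within $\abs{Q}$ rounds at each node and set-state stabilization coincides with single-run stabilization.
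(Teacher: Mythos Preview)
Your core construction --- the power-set automaton with $\pi' = \pi$, the transition collecting all $\delta_P$-images over selection functions from the neighbour subsets, the accepting and rejecting sets, the omnipresent variant, and the bound $k\abs{Q}$ --- matches the paper exactly.

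Where you diverge is in the two places you anticipate trouble, and in both the paper avoids your extra machinery via a single stronger observation that you do not state. On a rooted tree, forgetfulness alone guarantees that \emph{every} sequence $(q_n)_{n\in\N}$ with $q_n \in S^n_w$ is the trace at $w$ of a genuine run of $\cA$, not merely that each $q_n$ is separately realizable. The reason is that under forgetfulness the state at $w$ in round $n$ is determined solely by the initial choices at nodes of distance exactly $n$ from $w$; since distinct distances pick out disjoint node sets in a tree, those choices can be made independently for each $n$. In the paper's notation this is the identity $\cR = \cR'$, where $\cR$ is the set of $w$-traces of runs and $\cR' = \prod_n S^n_w$.

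With $\cR = \cR'$ in hand, your quasi-acyclicity worry disappears: if $(S^n_w)$ had a non-trivial cycle, say $S^i_w = S^j_w$ but $S^\ell_w \neq S^i_w$ for some $i<\ell<j$, pick $q_i = q_j$ and $q_\ell \neq q_i$ (possible since the two sets differ) and fill in the remaining coordinates arbitrarily; the resulting sequence is a run of $\cA$ violating quasi-acyclicity. So the bare subset construction already preserves quasi-acyclicity --- no round-counter tag is needed, and your claim that the subset sequence ``can oscillate even when every individual run of $\cA$ is quasi-acyclic'' is false in the forgetful setting on trees.

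For the fixed-point clauses the paper explicitly passes to the carefree case (it writes ``since $\cA$ is carefree\dots''), i.e., it uses quasi-acyclicity in addition to forgetfulness. Under that assumption the previous paragraph forces $(S^n_w)$ itself to be quasi-acyclic and hence eventually constant, say equal to $S^\infty_w$; both directions of the fixed-point equivalence then follow directly from $\cR = \cR'$: $\cA$ has a run stabilizing at some $q^* \in F$ iff $q^* \in S^n_w$ for all large $n$ iff $S^\infty_w$ meets $F$, and dually for rejecting. Your capped-counter enrichment and compactness extraction are therefore unnecessary for what the paper actually needs. Your instinct that the fixed-point clauses are delicate for merely forgetful $\cA$ is sound --- the plain subset automaton can indeed fail there --- but the paper does not argue, or use, that generality despite the lemma's phrasing.
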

\begin{proof}
    We use the power set construction to determinize a non-deterministic automaton.
    Assume we have a non-deterministic forgetful bounded $\FCMPA$ with rejecting states, i.e., an automaton $\cA = (Q, \pi, (\delta_{P})_{P \subseteq \Pi}, F_{1}, F_{2})$ such that $\pi \colon \cP(\Pi) \to \cP^+(Q)$, $\delta_{P} \colon \cM(Q) \to Q$ for each $P \subseteq \Pi$ and $F_{1}, F_{2} \subseteq Q, F_{1} \cap F_{2} = \emptyset$.
    Intuitively, the deterministic automaton keeps track of every run of $\cA$ simultaneously; the state of a node tells its state in each run, and the accepting and rejecting states are defined according to the accepting and rejecting conditions for non-deterministic automata.
    More formally, we construct the equivalent deterministic forgetful bounded $\FCMPA^{r}$ $\cA' = (Q', \pi', (\delta_{P}')_{P \subseteq \Pi}, F_{1}', F_{2}')$ as follows. The set of states is $Q' = \cP(Q)$ and the initialization function is $\pi' = \pi$. We define $\delta_{P}' \colon \cM(Q') \to Q'$ as follows. Let $M = \{\{Q_{1}, \dots, Q_{m}\}\} \in \cM(Q')$. We define that
    \[
        \delta_{P}'(M) = \{\, q' \in Q \mid \delta_{P}(\{\{q_{1}, \dots, q_{m}\}\}) = q' \text{ where }  q_{i} \in Q_{i} \text{ for all } i \in [m] \,\}.
    \]
    Finally we set that $F_{1}' = \{\, q \in Q' \mid q \cap F_{1} \neq \emptyset \,\}$ and $F_{2}' = \{\, q \in Q' \mid q \cap F_{1} = \emptyset, q \cap F_{2} \neq \emptyset \,\}$.
    For the case with $k$-omnipresent acceptance, we define instead that $F_{1}' = \{\, q \in Q' \mid q \subseteq F_{1} \,\}$ and $F_{2}' = \{\, q \in Q' \mid q \subseteq F_{2} \,\}$.

    First, we show that $\cA'$ is bounded. Consider for example the case where the bound of $\cA$ is $k = 1$. In the worst possible situation, a node might have at least $\abs{Q}$ neighbours such that the set of possible states for each neighbour is $Q$. Now, it is possible that each of the $\abs{Q}$ neighbours picks a different state, in which case $\cA$ could distinguish between all of them. Therefore, $\cA'$ must be able to distinguish between $\abs{Q}$ identical nodes. In general, for any $k$ a node might have $k \abs{Q}$ neighbours with the set $Q$ of possible states, and for each state in $Q$ there might be at least $k$ neighbours with that state, meaning that $\cA'$ must be able to distinguish between $k \abs{Q}$ identical nodes. In other words, if $\cA$ is a $k$-$\FCMPA^r$, then $\cA'$ is a $k \abs{Q}$-$\FCMPA^r$.

    It is easy to prove the following by induction over the run of an automaton for all rooted trees $(T,w)$.
    \begin{itemize}
        \item Let $(Q_n)_{n \in \N}$ be the run of $\cA'$ in $(T, w)$. Let $\cR$ denote the set of all possible runs of $\cA$ in $(T, w)$ and let $\cR'$ denote the set of all sequences $(q_n)_{n \in \N}$, where $q_n \in Q_n$. Then $\cR = \cR'$. 
    \end{itemize}
    In the induction step the key property is the forgetfulness of $\cA$. Since $\cA$ is forgetful, a node's next state only depends on the possible states of its neighbours and therefore each combination of the neighbours' possible states is possible. Otherwise, the automaton could end up in a state which is not even ``reachable'' (in the transition diagram of the transition function). 
    
    Now, with the result above, we prove cases 1 and 2 of the lemma. Basic accepting and rejecting are simple, as is the case for $k$-omnipresent accepting and rejecting, so we go over the 
    fixed point conditions.
    \begin{enumerate}
        \item The automaton $\cA$ fixed point accepts $(T,w)$ if and only if there is a run of $\cA$ at $(T,w)$ such that $\cA$ ends up in a fixed point state from the set $F_1$. Since $\cA$ is carefree and by the result above, this is equivalent to $\cA'$ ending up in a fixed point state from the set $F'_1$. This is true if and only if $\cA'$ fixed point accepts $(T,w)$.
        \item The automaton $\cA$ fixed point rejects $(T,w)$ if and only if there is a run of $\cA$ at $(T,w)$ such that $\cA$ ends up in a fixed point state from the set $F_2$, and there is no run of $\cA$ at $(T,w)$ such that $\cA$ ends up in a fixed point state from $F_1$. Since $\cA$ is carefree and by the result above, this is equivalent to $\cA'$ ending up in a fixed point state from $F_{2}'$; thus if $\cA'$ ended up in a fixed point state from $F_{1}'$, there would have to be a fixed point accepting run of $\cA$ at $(T,w)$. In any case, this is true if and only if $\cA'$ fixed point rejects $(T,w)$.
    \end{enumerate}

    Next we show that if $\cA$ is quasi-acyclic then $\cA'$ is also. We first show that if $(Q_{0}, \dots, Q_{t})$ is the state transition diagram for $\cA'$ at $(T, w)$ in round $t$ (i.e., $\cA'$ is in state $Q_{i}$ in round $i$ at $w$ for all $i \in [0;t]$), then each $(q_{0}, \dots, q_{t})$ is a possible state transition diagram for $\cA$ at $(T, w)$ in round $t$ (i.e., $\cA$ is in state $q_{i}$ in round $i$ at $w$ for all $i \in [0;t]$), where $q_{i} \in Q_{i}$ for all $i \in [0;t]$.
    
    We prove the claim by induction over $t$. For $t = 0$, we have that $Q_{0} = \pi(P)$ for some $P \subseteq \Pi$, and by definition each $q_{0} \in Q_{0}$ is a possible state in round $0$. Now assume the claim holds for $t$, and $(Q_{0}, \dots, Q_{t+1})$ is the state transition diagram for $\cA'$ at $(T, w)$. Each neighbour $w'$ of $w$ has a transition diagram $(Q_{0}', \dots, Q_{t}')$, and by the induction hypothesis each $(q_{0}', \dots, q_{t}')$ is a possible state transition diagram for $\cA$ at $(T, w')$, where $q_{i}' \in Q_{i}'$ for all $i \in [0;t]$. In particular, each $q_{t}'$ is a valid continuation of $(q_{0}', \dots, q_{t-1}')$, and each $q_{t}$ corresponds to at least some combination of state transition diagrams $(q_{0}', \dots, q_{t-1}')$ from the neighbours of $w$.
    By the definition of $\delta_{P}'$, every $q_{t+1} \in Q_{t+1}$ is obtained by applying $\delta_{P}$ to states $q_{t}' \in Q_{t}'$, and is a valid continuation to each $q_{t} \in Q_{t}$ as $q_{t}$ is independent of each $q_{t}'$. This completes the induction step.

    Now it is easy to see that $\cA'$ is quasi-acyclic (if $\cA$ is also). If there was a loop $(Q_{1}, \dots, Q_{n})$ in the state transition diagram of $\cA'$ at $(T, w)$ such that $Q_{1} = Q_{n}$ and $Q_{i} \neq Q_{1}$ for some $i \in [n]$, then by the above induction every $(q_{1}, \dots, q_{n})$ would be a possible state transition diagram of $\cA$ at $(T, w)$. In particular, since $Q_{i} \neq Q_{1}$, we can choose some $q_{1} \in Q_{1} \setminus Q_{i}$ or $q_{i} \in Q_{i} \setminus Q_{1}$. In either case we choose that $q_{1} = q_{n}$. Now $(q_{1}, \dots, q_{i}, \dots, q_{n})$ is a loop in the state transition diagram of $\cA$ at $(T, w)$ such that $q_{1} = q_{n}$ and $q_{i} \neq q_{1}$ for some $i \in [n]$, which is a contradiction.
\end{proof}

\subsection{Expressing MSO with distributed automata}

Now, we are ready to prove the following important lemma.
\begin{theorem}\label{theorem: Buchi distributed MSO automaton}
    Given an $\MSO$-formula $\varphi(x)$ 
    we can construct a deterministic carefree bounded $\FCMPA^{r}$ $\cA_\varphi$ such that for all rooted trees $(T, w)$ we have
    \[
    T \models \varphi(w) \iff \text{$\cA_\varphi$ fixed point accepts $(T, w)$}
    \]
    and
    \[
    T \not\models \varphi(w) \iff \text{$\cA_\varphi$ fixed point rejects $(T, w)$}.
    \] 
\end{theorem}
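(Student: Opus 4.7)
My plan is to prove the theorem by induction on the structure of the $\MSO$-formula $\varphi(x)$, treating the formula as built from atomic formulas $P(x)$, $R(x,y)$ and $x{=}y$, together with $\neg$, $\lor$, and existential quantification $\exists y$ (first-order) or $\exists Y$ (monadic second-order). Throughout, the invariant is that the constructed automaton is deterministic, carefree, bounded, with rejecting states, and that on any rooted tree $(T,w)$ every run either fixed point accepts or fixed point rejects, matching the truth value of $\varphi(w)$. Kripke models here carry an extended vocabulary $\Pi \cup \Pi_{\cX}$ in which each free variable of $\varphi$ is encoded by a proposition symbol $p_x$ or $p_Y$, and interpretations mark exactly one node for each first-order variable.

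For the base cases, I would construct very small automata. For $P(x)$, each node checks locally whether it carries both $p_x$ and $P$; its state stabilises immediately into accept/reject depending on whether the $p_x$-marked node (possibly elsewhere in the tree) has $P$. To propagate this information up the tree I run the standard ``observer'' construction in parallel with Lemma \ref{properness automaton}: each node counts (saturated at $1$) whether $p_x$ has been seen in its subtree and, if so, records the value of $P$ at that witness. The state stabilises at every node within finitely many rounds, giving fixed point accept/reject. The cases $R(x,y)$, $x{=}y$, and $Y(x)$ are analogous: each requires only local checks plus a bounded, quasi-acyclic propagation that stabilises, since the tree is rooted and the witnesses are unique.

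For the Boolean steps, negation is immediate: simply swap $F$ and $F'$, which preserves the carefree, bounded, fixed-point dichotomy by the inductive hypothesis. Disjunction $\varphi \lor \psi$ is handled by running the two automata in parallel (product construction), with the new set of accepting states consisting of product states where either component is in its accepting set, and the rejecting set consisting of product states where both components are in their rejecting sets. Since both factors eventually stabilise into accept or reject, so does the product, and carefree plus boundedness are preserved componentwise.

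The hard step, and the main obstacle, is existential quantification. For $\exists Y.\, \varphi$ I convert the automaton $\cA_\varphi$ into a non-deterministic automaton $\cB$ over the vocabulary without $p_Y$: in the initialisation each node guesses whether it belongs to $Y$, giving two possible initial states (with a copy of $\cA_\varphi$'s initial state for each choice of the truth value of $p_Y$). Because $\cA_\varphi$ is forgetful, these non-deterministic choices propagate cleanly; a run of $\cB$ corresponds to choosing a valuation of $Y$ and running $\cA_\varphi$ on the resulting interpretation. I define fixed point acceptance and rejection of $\cB$ by the non-deterministic conventions already introduced in the paper, and then apply Lemma \ref{lemma: non-deterministic to deterministic} to obtain an equivalent deterministic forgetful bounded $\FCMPA^r$, quasi-acyclic because $\cA_\varphi$ was. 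For $\exists y.\, \varphi$ I first form the product with the properness automaton from Lemma \ref{properness automaton}, so that only guesses marking exactly one node as $y$ contribute to acceptance (and any guess marking two or more contributes to rejection if no accepting guess exists); then I apply the same non-deterministic-to-deterministic reduction.

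The subtle point that requires care is ensuring the fixed-point dichotomy survives the existential step: I need to argue that for every rooted tree, either some non-deterministic run fixed point accepts or every run fixed point rejects, and in the first-order case that the properness automaton's rejection correctly vetoes improper guesses without interfering with proper accepting ones. This follows from the inductive invariant and the fact that the properness automaton itself is carefree and stabilises, so the product automaton has a stable state along every run; Lemma \ref{lemma: non-deterministic to deterministic} then preserves both the fixed-point behaviour and the carefree structure. Iterating this construction once for each connective and quantifier of $\varphi$ yields the required $\cA_\varphi$.
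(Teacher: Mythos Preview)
Your overall plan---induction on the structure of the $\MSO$-formula, swapping accepting/rejecting sets for negation, product for the Boolean connective, and a guess-then-determinise construction for the existential quantifiers, with Lemma~\ref{properness automaton} handling first-order singletons---is exactly the paper's strategy. The base cases and Boolean steps are fine (the paper uses $\land$ rather than $\lor$, but that is immaterial).

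There is, however, a genuine gap in your existential step. You propose to form a non-deterministic automaton $\cB$ over the reduced vocabulary by letting each node guess in the \emph{initialisation} whether it belongs to $Y$ (or is the witness for $y$), and then to invoke Lemma~\ref{lemma: non-deterministic to deterministic}. But recall that the automata here are \emph{modified} $\CMPA$s whose transition function $\delta_P$ is indexed by the node's label $P$: the sub-automaton $\cA_\psi$ uses $\delta_{P}$ versus $\delta_{P\cup\{p_Y\}}$ in \emph{every} round, depending on whether the node carries $p_Y$. Once you drop $p_Y$ from the vocabulary, a forgetful automaton has no way to remember its initial guess---forgetfulness means the transition ignores the current state, and the label no longer records the guess. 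So after round $0$ your $\cB$ necessarily applies $\delta_P$ (never $\delta_{P\cup\{p_Y\}}$) at every node, and a run of $\cB$ does \emph{not} correspond to running $\cA_\psi$ on a fixed interpretation of $Y$; your claim that ``these non-deterministic choices propagate cleanly'' fails precisely for the current node's own guess. Moreover, if you try to repair this by storing the guess in the state, the resulting automaton is no longer forgetful, and Lemma~\ref{lemma: non-deterministic to deterministic} no longer applies.

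The paper avoids this by building the deterministic power-set automaton $\cA_2$ \emph{directly}, with the crucial feature that the transition at a node with label $P$ collects the results of \emph{both} $\delta_P$ and $\delta_{P\cup\{y\}}$ in every round. Thus the guess is refreshed at each step rather than made once at initialisation. Correctness then requires a separate argument (given in the paper) that from a fixed-point accepting state one can extract a \emph{consistent} interpretation of $y$ level by level, exploiting the tree structure; this is the point at which your worry about ``the fixed-point dichotomy surviving the existential step'' is actually discharged. Your plan becomes correct once you replace the appeal to Lemma~\ref{lemma: non-deterministic to deterministic} by this per-round-guessing power-set construction and supply the extraction argument.
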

\begin{proof}
Intuitively, we construct an automaton that computes the truth of an MSO-formula in a rooted tree; the truth value might be determined at the root, or it may be determined in some branch of the tree, from where it is then passed down to the root. The truth of the formula may be incorrectly determined at first, but is eventually corrected, hence the 
fixed point accepting condition. 

Let $\cX = \{x_1, \ldots, x_\ell, X_1, \ldots, X_k \}$ be a set of first-order variables $x_1, \ldots, x_\ell$ and monadic second-order variables $X_1, \ldots, X_k$.
We prove by induction over the structure of a formula $\varphi(x_1, \ldots, x_\ell, X_1, \ldots, X_k)$ of $\MSO$ over $\Pi$ that there exists a deterministic 
forgetful quasi-acyclic bounded $\FCMPA^{r}$ $\cA_{\varphi}$ over $\Pi \cup \Pi_\cX$ such that
for all rooted trees $(T, w)$ over $\Pi$ and for all interpretations $(T^*, w)$ of $(T, w)$ over $\cX$ we have 

\begin{enumerate}
    \item $T^* \models \varphi \iff$ $(T^*, w)$ is fixed point accepted by $\cA_\varphi$,
    \item $T^* \not\models \varphi \iff$ $(T^*, w)$ is fixed point rejected by $\cA_\varphi$.
\end{enumerate}

Base cases:
\begin{enumerate}
    \item
    Case $\varphi \colonequals Py$. The automaton $\cA_{Py}$ uses two states: $q_{Py}$ and $q_{\neg Py}$. The state $q_{Py}$ is the only accepting state and $q_{\neg Py}$ is the only rejecting state.

    In the initial step, $\cA_{Py}$ checks if a node is labeled with $P$ and $y$ (or more specifically, $p_P$ and $p_y$) and goes to the state $q_{Py}$ if that is the case. Otherwise, the node enters the state $q_{\neg Py}$.

    In the transition step, the automaton $\cA_{Py}$ does the following in each node $w$. If the initial state of $w$ is $q_{Py}$, or if $w$ receives $q_{Py}$ from at least one of its neighbours, then $\cA_{Py}$ enters into the state $q_{Py}$. If the initial state of $w$ is $q_{\neg Py}$ and $w$ receives $q_{\neg Py}$ from all of its neighbours, then $\cA_{Py}$ enters into the state $q_{\neg Py}$.

    Clearly, for all rooted trees $(T, w)$ over $\Pi$ and all interpretations $(T^*, w)$ of $(T, w)$ over $\{y\}$, $\cA_{Py}$ fixed point accepts $(T^*, w)$ if and only if $T^* \models Py$ and $\cA_{Py}$ fixed point rejects $(T^*, w)$ if and only if $T^* \not\models Py$. 
    Also, $\cA_{Py}$ is clearly bounded (and the bound is $1$).
    \item 
    Case $\varphi \colonequals Ryz$. The automaton $\cA_{Ryz}$ uses four states: $q_{Ryz}$, $q_{\neg Ryz}$, $q_y$ and $q_z$. The state $q_{Ryz}$ is the only accepting state and the others are rejecting states.

    In the initial step (i.e. in the zeroth round), the automaton $\cA_{Ryz}$ checks if a node is labeled with $y$ and $z$. If it is labeled with $z$ but not $y$, then the automaton enters the state $q_z$. If it is labeled with $y$ but not $z$, the automaton enters the state $q_y$. Otherwise, the automaton enters the state $q_{\neg Ryz}$.

    In the transition step, the automaton $\cA_{Ryz}$ does the following in each node $w$. If the initial state of $w$ is $q_z$, then the automaton stays in the state $q_z$. If $w$ has the initial state $q_y$ and it does not receive a message $q_z$, then the automaton remains in the state $q_y$. If the initial state of $w$ is $q_y$ and it receives the message $q_z$, or if the initial state of $w$ is $q_{\neg Ryz}$ and it receives the message $q_{Ryz}$, then the automaton enters the state $q_{Ryz}$. Otherwise, the automaton enters the state $q_{\neg Ryz}$.

    Clearly, for all rooted trees $(T,w)$ over $\Pi$ and all interpretations $(T^*, w)$ of $(T, w)$ over $\{y, z\}$, $\cA_{Ryz}$ fized point accepts $(T^*, w)$ if and only if $T^* \models Ryz$ and $\cA_{Ryz}$ fixed point rejects $(T^*, w)$ if and only if $T^* \not\models Ryz$. 
    Also, $\cA_{Ryz}$ is clearly bounded (and the bound is $1$).
    \item 
    Case $\varphi \colonequals y = z$. The automaton $\cA_{y=z}$ uses two states: $q_{y = z}$ and $q_{y \neq z}$. The state $q_{y = z}$ is the only accepting state and $q_{y \neq z}$ is the only rejecting state.

    In the initial round, the automaton $\cA_{y=z}$ enters the state $q_{y = z}$ if the node is labeled by $y$ and $z$. Otherwise, it enters the state $q_{y \neq z}$.

    In the transition rounds, $\cA_{y = z}$ performs as follows in each node $w$. If the initial state of $w$ is $q_{y = z}$, or if it receives the message $q_{y = z}$, the automaton enters the state $q_{y = z}$. Otherwise, the automaton enters the state $q_{y \neq z}$.

    Clearly, for all rooted trees $(T,w)$ over $\Pi$ and all interpretations $(T^*, w)$ of $(T, w)$ over $\{y, z\}$, $\cA_{y = z}$ fixed point accepts $(T^*, w)$ if and only if $T^* \models y=z$ and $\cA_{y = z}$ fixed point rejects $(T^*, w)$ if and only if $T^* \not\models y = z$. 
    Also, $\cA_{y = z}$ is clearly bounded (and the bound is $1$).
\end{enumerate}

\textbf{Induction hypothesis:} 
Assume that the induction claim holds for $\MSO$-formulae $\psi$, $\psi_1$ and $\psi_2$. Let $\cY$, $\cY_1$ and $\cY_2$ denote the sets of free variables in these formulae respectively. We let $\cA_\psi$ denote the corresponding automaton for $\psi$ over $\Pi \cup \Pi_\cY$. Analogously, we let $\cA_{\psi_1}$ and $\cA_{\psi_2}$ denote the corresponding automata for $\psi_1$ and $\psi_2$ respectively. Note that, by the induction hypothesis, these automata are deterministic, forgetful and quasi-acyclic.
\begin{enumerate}
    \item Case $\varphi \colonequals \neg \psi$.
    The automaton $\cA_{\neg \psi}$ is obtained from $\cA_{\psi}$ by designating that accepting states of $\cA_{\psi}$ are rejecting states of $\cA_{\neg \psi}$ and rejecting states of $\cA_{\psi}$ are accepting states of $\cA_{\neg \psi}$.

    Now, for all rooted trees $(T,w)$ and all interpretations $(T^*, w)$ of $(T, w)$ over $\cY$, $T^* \models \neg \psi$ iff $T^* \not\models \psi$ iff $\cA_\psi$ fixed point rejects $(T^*, w)$ (by the induction hypothesis) iff $\cA_{\neg \psi}$ fixed point accepts $(T^*, w)$.
    Similarly, $T^* \not\models \neg \psi$ iff $T^* \models \psi$ iff $\cA_\psi$ fixed point accepts $(T^*, w)$ (by the induction hypothesis) iff $\cA_{\neg \psi}$ fixed point rejects $(T^*, w)$. It is also easy to see that the automaton is forgetful and quasi-acyclic
    since we only flipped the accepting states with the rejecting states. 
    Moreover, it is clear that $\cA_{\neg \psi}$ has the same bound as $\cA_{\psi}$.
    \item Case $\varphi \colonequals \psi_1 \land \psi_2$. 
    Let $\Pi_1 = \Pi \cup \Pi_{\cY_1}$ and $\Pi_2 = \Pi \cup \Pi_{\cY_2}$. 
    To examine this case, let $\cA_{\psi_1} = (Q_{1}, \pi_{1}, (\alpha_{P})_{P \subseteq \Pi_1}, F_{1}, F'_1)$ and let $\cA_{\psi_2} = (Q_{2}, \pi_{2}, (\beta_{P})_{P \subseteq \Pi_2}, F_{2}, F'_2)$.
    We construct the automaton $\cA_{\psi_1 \land \psi_2} = (Q, \pi, (\delta_{P})_{P \subseteq \Pi'}, F, F')$ over $\Pi' = \Pi \cup \Pi_{\cY_1 \cup \cY_2}$ by concatenating $\cA_{\psi_1}$ and $\cA_{\psi_2}$ as follows.
    The set of states is $Q = Q_{1} \times Q_{2}$.
    Let $P \subseteq \Pi'$, $P_1 = P \cap \Pi_{1}$ and $P_2 = P \cap \Pi_{2}$. 
    The initialization function is defined as follows: $\pi(P) = (q_{1}, q_{2})$ where $q_{1} = \pi_{1}(P_1)$ and $q_{2} = \pi_{2}(P_2)$.
    The transition function is defined as follows. Let $M = \{\{(q^{1}_{1}, q^{1}_{2}), \dots, (q^{m}_{1}, q^{m}_{2})\}\} \in \cM(Q)$. Now, $\delta_{P}(M) = (q_{1}', q_{2}')$, where $q_{1}' = \alpha_{P_1}(\{\{q^{1}_{1}, \dots, q^{m}_{1}\}\})$ and $q_{2}' = \beta_{P_2}(\{\{q^{1}_{2}, \dots, q^{m}_{2}\}\})$.
    The set of accepting states is $F = F_{1} \times F_{2}$, whereas the set of rejecting states is $F' = (F'_1 \times Q_2) \cup (Q_1 \times F'_2)$.

    Now, it is easy to show by induction on $m \in \N$ that for all rooted trees $(T, w)$ and for all interpretations $(T^*, w)$ of $(T, w)$ over $\cY_1 \cup \cY_2$, $\cA_{\psi_1 \land \psi_2}$ is in the state $(q_1, q_2)$ in round $m$ at $(T^*, w)$ iff $\cA_{\psi_2}$ is in the state $q_1$ and $\cA_{\psi_2}$ is in the state $q_2$ in round $m$ at $(T^*, w)$. Thus, for all rooted trees $(T,w)$ over $\Pi$ and all interpretations $(T^*, w)$ of $(T, w)$ over $\cY_1 \cup \cY_2$, $T^* \models \psi_1 \land \psi_2$ iff $T^* \models \psi_1 $ and $T^* \models \psi_2$ iff $(T^*, w)$ is fixed point accepted by $\cA_{\psi_1}$ and $(T^*, w)$ is fixed point accepted by $\cA_{\psi_2}$ (by the induction hypothesis) iff $(T^*,w)$ is fixed point accepted by $\cA_{\psi_1 \land \psi_2}$.
    Similarly, for all rooted trees $(T,w)$ and all interpretations $(T^*, w)$ of $(T, w)$ over $\cY_1 \cup \cY_2$, $T^* \not\models \psi_1 \land \psi_2$ iff $T^* \not\models \psi_1 $ or $T^* \not\models \psi_2$ iff $(T^*, w)$ is fixed point rejected by $\cA_{\psi_1}$ or $(T^*, w)$ is fixed point rejected by $\cA_{\psi_2}$ (by the induction hypothesis) iff $(T^*,w)$ is fixed point rejected by $\cA_{\psi_1 \land \psi_2}$.

    Clearly, by the construction, the automaton $\cA_{\psi_1 \land \psi_2}$ is forgetful and quasi-acyclic.
    Lastly, we note that the bound of $\cA_{\psi_1 \land \psi_2}$ is the maximum of the bounds of $\cA_{\psi_1}$ and $\cA_{\psi_2}$.

    \item Case $\varphi \colonequals \exists y \psi$. We first concatenate $\cA_{\psi}$ with the automaton from Lemma \ref{properness automaton} for the lone variable $y$ to obtain an automaton $\cA_{1}$, which behaves the same as $\cA_{\psi}$, but is run in models that interpret $y$. We then construct a deterministic automaton $\cA_{2}$ where we modify the initialization function to ``guess'' whether $y$ is true in each node; runs where $y$ is true in more than one node are simply ignored utilizing the previous concatenation with the automaton that checks that $y$ is interpreted ``properly'', i.e., as a singleton.

    More formally, we define the automaton $\cA_{1} = (Q_{1}, \pi_{1}, (\delta_{P})_{P \subseteq \Pi \cup \Pi_{\cY \cup \{y\}}}, F_{1}, F'_1)$ 
    as follows.
    First, let $\cA_{\psi} = (Q_{\psi}, \pi_{\psi}, (\alpha_{P})_{P \subseteq \Pi \cup \Pi_{\cY}}, F_{\psi}, F'_{\psi})$ be the automaton for $\psi$, and
    let $\cA_{\text{proper } y}  = (Q_{y}, \pi_{y}, (\beta_{P})_{P \subseteq \Pi \cup \Pi_{\{y\}}}, F_{y}, F'_y)$ be the automaton obtained by Lemma~\ref{properness automaton} for variable $y$, i.e, $\cA_{\text{proper } y}$ checks that $y$ is labeled correctly.
    We concatenate $\cA_{\psi}$ with $\cA_{\text{proper }y}$ as follows.
    \begin{itemize}
        \item The set of states is $Q_{1} \colonequals Q_{\psi} \times Q_{y}$.
        \item The initialization function $\pi_{1} \colon \Pi \cup \Pi_{\cY \cup \{y\}} \to Q_{1}$ is defined as follows. Let $P \subseteq \Pi \cup \Pi_{\cY \cup \{y\}}$. Now $\pi_{1}(P) = (\pi_{\psi}(P \cap (\Pi \cup \Pi_{\cY})), \pi_{y}(P \cap (\Pi \cup \Pi_{\{y\}})))$.
        \item Let $P \subseteq \Pi \cup \Pi_{\cY \cup \{y\}}$. The transition function $\delta_P \colon \cM(Q_{1}) \to Q_{1}$ is defined as follows. First, let $M = \{\{(q_{1}, q_{1}'), \dots, (q_{m}, q_{m}')\}\} \in \cM(Q_{1})$. Now $\delta_{P}(M) = (r, r')$ where $r = \alpha_{P \cap (\Pi \cup \Pi_{\cY})}(\{\{q_{1}, \dots, q_{m}\}\})$ and $r' = \beta_{P \cap (\Pi \cup \Pi_{\{y\}})}(\{\{q_{1}', \dots, q_{m}'\}\})$.
        \item The set of accepting states is $F_{1} = \{\, (q, q') \in Q_{1} \mid q \in F_{\psi}, q' \in F_{y} \,\}$ and the set of rejecting states is $F'_{1} = \{\, (q, q') \in Q_{1} \mid q \in F'_{\psi},\, q' \in F_y \,\}$. 
    \end{itemize}
    Now, (by similar arguments as for the conjunction) for all rooted trees $(T, w)$ and for all interpretations $(T^*,w)$ over $\Pi \cup \Pi_{\cY \cup \{y\}}$: clearly $\cA_1$ fixed point accepts $(T^*, w)$ if and only if $\cA_{\psi}$ fixed point accepts $(T^*, w)$ and $\cA_{\text{proper } y}$ fixed point accepts $(T^*, w)$, and $\cA_1$ fixed point rejects $(T^*, w)$ if and only if $\cA_{\psi}$ fixed point rejects $(T^*, w)$ and $\cA_{\text{proper } y}$ fixed point accepts $(T^*, w)$. 
    It is also easy to see that $\cA_1$ is forgetful and quasi-acyclic
    over interpretations over $\Pi \cup \Pi_{\cY \cup \{y\}}$ since $\cA_{\psi}$ and $\cA_{\text{proper }y}$ are forgetful and quasi-acyclic. 
    Clearly the bound of $\cA_{1}$ is the maximum of the bounds of $\cA_{\psi}$ and $\cA_{\text{proper } y}$.

    We then define a deterministic automaton $\cA_{2} = (Q_{2}, \pi_{2}, (\delta'_{P})_{P \subseteq \Pi \cup \Pi_{\cY \setminus \{y\}}}, F_{2}, F_{2}')$ that guesses whether a node is labeled $y$ in every round, and keeps track of all possible outcomes. It is defined as follows.
    \begin{itemize}
        \item The set of states is $Q_{2} = \cP(Q_{1})$.
        \item The initialization function is defined by $\pi_{2}(P) = \{\, \pi_{1}(P), \pi_{1}(P \cup \{y\}) \,\}$,
        \item The transition function is defined such that $\delta_{P}'(M)$ is the set of all $\delta_{P}(M')$ and $\delta_{P \cup \{y\}}(M')$, where $M'$ is obtained from $M$ by picking one state in $Q_{1}$ from each element of $M$.
        \item The set of accepting states is $F_{2} = \{\, Q \in Q_{2} \mid q \in Q \text{ for some } q \in F_{1} \,\}$.
        \item The set of rejecting states is $F_{2}' = \{\, Q \in Q_{2} \mid q \notin Q \text{ for all } q \in F_{1} \text{ and } q' \in Q \text{ for some } q' \in F_{1}' \,\}$.
    \end{itemize}

    Clearly by the construction $\cA_2$ is forgetful, since $\cA_1$ is also. 
    It is easy to see that $\cA_2$ is also quasi-acyclic as follows. Its states can be ordered by following the ordering of their components.
    In round $n$, the automaton $\cA_2$ has scanned the tree up to the depth $n$ from node $v$. Therefore, each ``guess'' for the variable $y$ at round $i$ can be also guessed at round $i+1$. 
    
    Next, we will show for all interpretations $(T^*, w)$ over $\Pi \cup \Pi_{\cY \setminus \{y\}}$ that 
    \begin{enumerate}
        \item $\cA_{1}$ fixed point accepts $(T^*[v/y], w)$ for some $v$ iff $\cA_{2}$ fixed point accepts $(T^*, w)$,
        \item $\cA_{1}$ fixed point rejects $(T^*[v/y], w)$ for all $v$ iff $\cA_{2}$ fixed point rejects $(T^*, w)$. 
    \end{enumerate}
    
    If $\cA_{1}$ fixed point accepts $(T^*[v/y], w)$ for some $v$, then clearly $\cA_{2}$ fixed point accepts $(T^*, w)$, since the accepting fixed point is always in the state of $w$ starting from some round. For converse, assume that $\cA_{2}$ fixed point accepts $(T^*, w)$. Let $n$ be the first round where $\cA_2$ reaches a fixed point accepting state. The correct interpretation for $y$ can be constructed as follows. We start from the root and pick an accepting state that appears in the set of states at round $n$. For nodes at level $1$, we pick a state for each node from round $n-1$ which gives the accepting state chosen for the root at round $n$. We continue this process up to the level $n$. Now, the correct labels can be chosen by first looking at the initial states of the nodes at level $n$. Then, the labels for the nodes at level $n-1$ can be chosen and so on. The labeling has to be correct since for the root we chose an accepting state. 

    If $\cA_{1}$ fixed point rejects $(T^*[v/y], w)$ for all $v$, then clearly $\cA_{2}$ fixed point rejects $(T^*, w)$, since the rejecting fixed point is always in the state of $w$ starting from some round and there is no round where an accepting state would appear. For the converse, assume that $\cA_{2}$ fixed point rejects $(T^*, w)$.
    Let $n$ be a round where $\cA_2$ is in a fixed point rejecting state. By using a similar arguments as in the case of accepting, we can construct an interpretation for $y$. We notice that every possible interpretation is invalid, since otherwise $\cA_2$ would fixed point accept. Thus, $\cA_1$ fixed point rejects for every possible interpretation.
    \item Case $\varphi \colonequals \exists Y \psi$. The same idea as with the first-order variables except that we do not have to concatenate the automaton with $\cA_{\text{proper } y}$.
\end{enumerate}
\end{proof}

We now create a non-deterministic automaton without rejecting states that computes $\cA_{\varphi}$ with an ordinary accepting condition.

\begin{lemma}\label{lemma: non-deterministic MSO automaton}
    Let $\varphi$ be an $\MSO$-formula definable by an $\VGML$-formula. We can construct a non-deterministic carefree bounded $\FCMPA$ $\cA$ such that for every rooted tree $(T, w)$: $T \models \varphi(w)$ if and only if there is a $k \in \N$ such that $\cA$ $k$-omnipresently accepts $(T, w)$.
\end{lemma}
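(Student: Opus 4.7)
The plan is to transform the deterministic carefree bounded $\FCMPA^r$ $\cA_\varphi$ from Theorem~\ref{theorem: Buchi distributed MSO automaton} (whose fixed-point accept/reject captures $\varphi$) into a non-deterministic carefree bounded $\FCMPA$ $\cA$ with $k$-omnipresent acceptance. The central observation I would use is that $\cA_\varphi$ is quasi-acyclic with a finite state set $Q$, so in any rooted tree the state sequence at the root is eventually constant, and the stabilized value lies in $F$ iff $T\models\varphi(w)$. A naive attempt — taking $\cA=\cA_\varphi$ with accepting states $F$ and trivially non-deterministic initialization — fails because $\cA_\varphi$ can transiently visit $F$ while fixed-point rejecting (for instance, this happens already after the negation step of Theorem~\ref{theorem: Buchi distributed MSO automaton}); the truth of "visits $F$ at some round" therefore does not coincide with fixed-point acceptance.

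To overcome this, I would augment the state space of $\cA_\varphi$ by a second coordinate that serves as a locally propagated witness of stabilization and can be guessed non-deterministically at the initial round. Concretely, $\cA$ would have states $Q\times Q$ with $\pi'(P)=\{(\pi_\varphi(P),g)\mid g\in Q\}$, and a forgetful transition $\delta'_P(M)$ that applies $\delta_\varphi$ independently to the two projections of $M$: the new first coordinate is $\delta_\varphi(P,\{q_1:(q_1,q_2)\in M\})$ and the new second coordinate is $\delta_\varphi(P,\{q_2:(q_1,q_2)\in M\})$. Accepting states are $\{(q,q)\mid q\in F\}$, i.e.\ the two coordinates must coincide and lie in $F$. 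Forgetfulness is immediate from the form of $\delta'_P$, and boundedness is inherited from $\cA_\varphi$; quasi-acyclicity is checked by noting that both coordinates run the carefree automaton $\cA_\varphi$ and each component stabilizes, so pairs do not revisit.

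The link to $\varphi$ is then provided by combining Proposition~\ref{proposition: new prefix tree and extensions 2} with the stabilization structure of carefree runs. If $T\models\varphi(w)$, the proposition gives a depth $k$ such that the $k$-prefix determines the truth, and by quasi-acyclicity the first coordinate of $\cA$'s run at $w$ is constant at the fixed-point value $q^{*}\in F$ from some round onward; choosing guesses so that the second-coordinate run is also at $q^{*}$ by that round makes the pair $(q^{*},q^{*})$ appear simultaneously along every run at a common round $k$. Conversely, if $T\not\models\varphi(w)$, the first coordinate eventually settles in $F'$, so the pair $(q,q)$ with $q\in F$ cannot stabilize at the root, and $k$-omnipresent acceptance fails for every $k$.

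The hard part — and the step I would spend most of the work on — is establishing the $k$-omnipresent direction rigorously: every non-deterministic choice of the second-coordinate guesses must land the root in an accepting pair at the same round $k$, not just the ``canonical'' guess. I expect to handle this by refining the construction so that the second coordinate is restricted to a subset of $Q$ closed under $\delta_\varphi$-transitions (e.g.\ only the stabilizable states reachable from $\pi_\varphi$), ensuring that every run's second coordinate is guided towards the correct fixed-point value, while still giving genuine non-determinism. Once this synchronization is secured, Lemma~\ref{lemma: non-deterministic to deterministic} converts $\cA$ into a deterministic carefree $\FCMPA$ with ordinary acceptance, completing the proof of Lemma~\ref{MSO-definable properties}.
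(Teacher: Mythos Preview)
Your construction has a genuine gap in the forward direction: the matching acceptance condition $\{(q,q)\mid q\in F\}$ is too restrictive for $k$-omnipresent acceptance. Suppose $T\models\varphi(w)$ and fix any $k$. The first coordinate at $w$ in round $k$ is the single deterministic value $q_1^{(k)}$ of $\cA_\varphi$, while the second coordinate at $w$ in round $k$ varies over runs. For omnipresence you would need \emph{every} second-coordinate run to land exactly on $q_1^{(k)}$ in that same round---not merely in $F$. Even after you restrict the second-coordinate guesses to, say, the fixed-point states of $\cA_\varphi$, each assignment of guesses at level $k$ corresponds to a $k$-extension $T'$ of $T$, and the second coordinate at $w$ in round $k$ is then the fixed-point of $\cA_\varphi$ in $T'$. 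Proposition~\ref{proposition: new prefix tree and extensions 2} guarantees this value is in $F$, but nothing forces it to equal the particular state $q_1^{(k)}$ (distinct $k$-extensions may have distinct accepting fixed points). Your sentence ``choosing guesses so that the second-coordinate run is also at $q^{*}$\ldots makes the pair $(q^{*},q^{*})$ appear simultaneously along every run'' conflates the existence of one good run with the required universality over all runs.

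The paper's construction avoids this by discarding the first coordinate and the matching requirement entirely. It sets $\pi'(P)=Q_P$, the set of all states arising as the fixed point of $\cA_\varphi$ at the root of \emph{some} tree whose root has label $P$, keeps the transition functions of $\cA_\varphi$ unchanged, and takes $F$ itself as the accepting set. Then every run of $\cA$ corresponds to a $k$-extension $T'$ of $T$ (because each guessed state at a level-$k$ node is witnessed by some subtree), and the state at $w$ in round $k$ is the fixed point of $\cA_\varphi$ in $T'$; by $k$-extendability this lies in $F$ for all runs, giving $k$-omnipresent acceptance. Conversely, the run that guesses the actual fixed points of the subtrees $T_v$ recovers the true fixed point of $T$ at $w$ in round $k$, so if that is accepting then $T\models\varphi(w)$. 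The restriction of guesses to the concrete set $Q_P$ of realizable fixed points---rather than the vaguer ``stabilizable states reachable from $\pi_\varphi$''---is exactly what makes the correspondence with $k$-extensions, and hence the use of Proposition~\ref{proposition: new prefix tree and extensions 2}, go through.
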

\begin{proof}
    Let $\cA_{\varphi} = (Q, \pi, (\delta_{P})_{P \subseteq \Pi}, F, F')$ be the deterministic carefree bounded $\FCMPA^{r}$ constructed in Theorem \ref{theorem: Buchi distributed MSO automaton}. Let $q_{(T, w)}$ denote the fixed-point state of $\cA_{\varphi}$ at the root $w$ in the tree $T$. Let $Q_{P} = \{\, q_{(T, w)} \in Q \mid V(w) = P \,\}$, i.e., $Q_{P}$ is the set of all fixed-point states that $\cA_{\varphi}$ can visit in a node labeled with the proposition symbols in $P$. We construct the non-deterministic automaton $\cA = (Q, \pi', (\delta_{P})_{P \subseteq \Pi}, F)$ where $\pi'(P) = Q_{P}$ for all $P \subseteq \Pi$. In other words, we drop the rejecting states and initialize each node with its fixed point state as the root of some tree. It is easy to see that $\cA$ is carefree and bounded, as the transition function is unchanged from $\cA_{\varphi}$. Now we are ready to prove the equivalence.

    ``$\Rightarrow$'' 
    Assume $T \models \varphi(w)$. Let $\bigvee_{i \in \N} \psi_{i}$ be the $\VGML$-formula equivalent to $\varphi$. 
    Therefore, by Proposition \ref{proposition: new prefix tree and extensions 2} for some $k \in \N$, $(T, w)$ is $k$-extendable w.r.t. $\varphi$. Let $T'$ be a $k$-extension of $T$.
    Thus, $\cA_{\varphi}$ fixed point accepts $(T', w)$ by Theorem \ref{theorem: Buchi distributed MSO automaton}. 
    For every node $v$ on level $k$ in $T$, it is easy to see that its fixed point in $T'$ is the same as its fixed point in $T'_{v}$ (the subtree of $T'$ rooted at $v$). Now if each node $v$ on level $k$ of $T$ chooses its fixed point in $T'_{v}$ in the initial round, it is clear that $w$ visits an accepting state in round $k$ in $T$. Since this holds for every $k$-extension of $T$, and since each state in $Q_{P}$ is the fixed point in some possible extension, it is clear that every run of $\cA$ accepts $w$ in $T$ in round $k$.

    ``$\Leftarrow$'' 
    Assume that $k \in \N$ and every run of $\cA$ in $T$ accepts $w$ in round $k$. Consider a run of $\cA$, where for every node $v$ in $T$ on level $k$, we initialize $v$ by choosing the state $q_{v} \in Q_{V(v)}$ which is the fixed point of $\cA_{\varphi}$ at $v$ in $T_{v}$ (the subtree of $T$ rooted at $v$), which is the same as the fixed point of $\cA_{\varphi}$ at $v$ in $T$. We know that this run in $T$ accepts in round $k$, where the state of $w$ in round $k$ is clearly the fixed point of $\cA_{\varphi}$ at $w$ in $T$. Since this state is an accepting state, we know that $\cA_{\varphi}$ fixed point accepts $w$, which by 
    Theorem \ref{theorem: Buchi distributed MSO automaton} means that $T \models \varphi(w)$.
\end{proof}

By Lemma \ref{lemma: non-deterministic MSO automaton} we can conclude Lemma \ref{MSO-definable properties} (which we recall first).

\textbf{Lemma \ref{MSO-definable properties}.}
\emph{Given an $\MSO$-formula $\varphi(x)$ definable by an $\VGML$-formula, we can construct a bounded carefree $\FCMPA$ $\cA$ such that for all rooted trees $(T, w)$ we have
    \[
    \text{$(T, w)$ is accepted by $\cA$} \iff T \models \varphi(w).
    \]
}

\begin{proof}
    Let $\cA'$ be the non-deterministic carefree bounded $\FCMPA$ constructed in Lemma \ref{lemma: non-deterministic MSO automaton}. By Lemma \ref{lemma: non-deterministic to deterministic}, we can construct a deterministic carefree bounded $\FCMPA$ $\cA$ that accepts a rooted tree $(T, w)$ if and only if $\cA'$ $k$-omnipresently accepts $(T, w)$ for some $k \in \N$. Thus by Lemma \ref{lemma: non-deterministic MSO automaton}, $\cA$ accepts a rooted tree $(T, w)$ if and only if $T \models \varphi(w)$.
\end{proof}

By Lemma \ref{MSO-definable properties} it is straightforward to conclude the main theorem.

\textbf{Theorem 4.1 in \cite{ahvonen_neurips}.}
\emph{Let $P$ be a property definable in $\MSO$. Then $P$ is expressible as a $\GNN[\R]$ if and only if it is definable in $\GMSC$.}
\begin{proof}
Let $P$ be an $\MSO$-definable (node) property over $\Pi$.

First, assume that a $\GNN[\R]$ $\cG$ defines $P$. Then by Theorem 3.4 in \cite{ahvonen_neurips} and Lemma \ref{MSO-definable properties}, there is a bounded $\FCMPA$ that defines $P$. Thus by Proposition 3.1 in \cite{ahvonen_neurips}, there is an equivalent $\GMSC$-program that defines $P$.

For the converse, assume that a $\GMSC$-program $\Lambda$ defines $P$. Now, by Proposition 2.4 in \cite{ahvonen_neurips}, there exists an equivalent $\VGML$-formula $\psi$ that defines $P$ and thus by Theorem 3.4 in \cite{ahvonen_neurips}, there is an equivalent $\GNN[\R]$ that defines $P$.  
\end{proof}

We conclude with a brief discussion about the framework where the fixed point accepting condition is used by $\CMPA$s and $\GNN$s. In this framework, it is clear that $\GNNF$s have the same expressive power as bounded $\FCMPA$s and that $\GNN[\R]$s have the same expressive power as $\CMPA$s; this follows from the translations in \cite{ahvonen_neurips}. We then obtain the following result as a corollary of Theorem \ref{theorem: Buchi distributed MSO automaton}.
\begin{theorem}
    In restriction to $\MSO$, each $\GNN[\R]$ with the fixed point accepting condition translates to an equivalent carefree $\GNNF$ with the fixed point accepting condition.
\end{theorem}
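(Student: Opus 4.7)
The approach is to apply Theorem \ref{theorem: Buchi distributed MSO automaton} directly and then convert the resulting automaton into a $\GNNF$ using the $\FCMPA$/$\GNNF$ correspondence recalled just before the theorem. Suppose $\cG$ is a $\GNN[\R]$ with fixed point acceptance defining an $\MSO$-definable node property $P$, and let $\varphi(x)$ be an $\MSO$-formula defining $P$. I first apply Theorem \ref{theorem: Buchi distributed MSO automaton} to $\varphi$ to obtain a deterministic carefree bounded $\FCMPA^{r}$ $\cA_\varphi$ whose fixed point acceptance (resp.\ rejection) on any rooted tree $(T, w)$ matches $T \models \varphi(w)$ (resp.\ $T \not\models \varphi(w)$). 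Since exactly one of these alternatives holds on each tree, I drop the rejecting-state designations to obtain a carefree bounded $\FCMPA$ $\cA$ with fixed point acceptance that defines $P$ on rooted trees.

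Next, I lift the tree-restricted equivalence to arbitrary pointed Kripke models. Both $\GNN[\R]$s and $\CMPA$s with fixed point accepting conditions are unraveling-invariant: a node's value in round $t$ is a function only of the multiset of its neighbors' values in round $t-1$, so by induction the trajectory at $(M, w)$ coincides with the trajectory at the root of its tree unraveling. Because $P$ is $\GNN[\R]$-definable, it is in particular unraveling-invariant, and therefore any $\MSO$-formula defining $P$, in particular $\varphi$, takes the same truth value at $(M,w)$ as at the root of its unraveling. Combining these observations, $\cA$ and $\cG$ agree on every pointed Kripke model.

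Finally, I invoke the translation from bounded $\FCMPA$s to $\GNNF$s in the fixed point framework, noted in the discussion preceding the theorem as following from the translations in \cite{ahvonen_neurips}. Applied to $\cA$ this yields a $\GNNF$ with fixed point acceptance equivalent to $\cG$. The main point I need to verify is that this translation preserves carefreeness: forgetfulness carries over because in the state-faithful encoding of $\cA$ into feature vectors, a transition that ignores the current state can be implemented by an R-simple combination in which the matrix $C$ acting on the node's own feature vector is zero, so that $\COM(q,p) = \mathrm{ReLU}^*(pA + b)$; and quasi-acyclicity carries over because the per-node feature-vector trajectory mirrors the state trajectory of $\cA$ step by step, so any loop in the former would induce a forbidden loop in the latter. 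This carefreeness check is the main technical content of the argument, with the rest being an assembly of results already at hand.
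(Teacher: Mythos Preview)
Your proposal follows the same line as the paper, which states the theorem as a direct corollary of Theorem~\ref{theorem: Buchi distributed MSO automaton} together with the bounded-$\FCMPA$/$\GNNF$ correspondence recorded just before the statement; no further argument is given there. You supply considerably more detail, in particular the unraveling step that lifts the tree-level equivalence of Theorem~\ref{theorem: Buchi distributed MSO automaton} to arbitrary pointed models, and an explicit attempt to show that carefreeness survives the translation to $\GNNF$.

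There is, however, a gap in your forgetfulness argument. The automaton $\cA_\varphi$ produced in Theorem~\ref{theorem: Buchi distributed MSO automaton} is a \emph{modified} $\CMPA$: its transition functions $\delta_P$ are indexed by the node's label $P\subseteq\Pi$, and forgetfulness in that setting means only that $\delta_P(q,M)$ ignores $q$, not that it ignores $P$. The base-case constructions (e.g.\ $\cA_{Py}$, which branches on ``the initial state of $w$'') use $P$ essentially. A $\GNNF$ as defined in the paper has a single transition $\delta\colon\cF^d\times\cM(\cF^d)\to\cF^d$ with no per-round access to the label, so to simulate $\cA_\varphi$ the label $P$ must be carried in the feature vector across rounds. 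Taking $C=0$ discards this information after one step, and the resulting $\GNNF$ no longer simulates $\cA_\varphi$. The paper does not define ``carefree'' for $\GNNF$s and does not address this point; the most natural reading of the corollary is that the target $\GNNF$ is one whose underlying bounded $\FCMPA$ is carefree in the modified sense, in which case the statement follows immediately from Theorem~\ref{theorem: Buchi distributed MSO automaton} and the translation, without your encoding-level claim about $C$. Your quasi-acyclicity argument, by contrast, is unaffected: if the label $P$ is retained as a fixed component of the feature vector, the trajectory at $w$ is $(P,q_0),(P,q_1),\dots$ and loops occur precisely when they do in $\cA_\varphi$.
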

The above theorem also holds when the $\GNN[\R]$s have any other accepting condition.
The other direction, that each $\GNNF$ with the fixed point accepting condition translates to an equivalent $\GNN[\R]$ with the fixed point accepting condition is trivial and also applies in restriction to MSO. 
This direction also holds when both $\GNNF$s and $\GNN[\R]$s have other acceptance conditions, as long as they both have the same accepting condition.
Both directions hold even when we restrict to the cases where the $\GNN$s involved are quasi-acyclic, forgetful, or both.
On another note, it seems clear that counting bisimulation-invariant MSO has the same expressive power as the graded modal $\mu$-calculus over finite models, though this result has not been formally stated in the literature. If this is indeed the case, then we obtain the following result by the fact that the graded modal $\mu$-calculus is a fragment of MSO:
for each formula of the graded modal $\mu$-calculus, we can construct an equivalent $\GNNF$ and $\GNN[\R]$ with the fixed point accepting condition.
Conversely, since $\GNN$s are counting bisimulation-invariant, it would follow that in restriction to MSO, each $\GNNF$ and $\GNN[\R]$ with the fixed point accepting condition translates to an equivalent formula of the graded modal $\mu$-calculus.

\section{Conclusion}

We have given an alternative proof for an existing result that recurrent graph neural networks working with real numbers have the same expressive power relative to MSO as the graded modal substitution calculus. The proof involves distributed message passing automata rather than parity tree automata as used in the existing proof. We have also discussed some variants of accepting conditions.

\bibliography{references}

\end{document}